\title{Data Structures Lower Bounds and Popular Conjectures} 
\author{Pavel Dvořák}{Charles University, Prague, Czech Republic}{koblich@iuuk.mff.cuni.cz}{}{}
\author{Michal Koucký}{Charles University, Prague, Czech Republic}{koucky@iuuk.mff.cuni.cz}{}{}
\author{Karel Král}{Charles University, Prague, Czech Republic}{kralka@iuuk.mff.cuni.cz}{}{}
\author{Veronika Slívová}{Charles University, Prague, Czech Republic}{slivova@iuuk.mff.cuni.cz}{}{}
\authorrunning{P. Dvořák, M. Koucký, K. Král, and V. Slívová} 
\keywords{Data structures, Circuits, Lower bounds, Network Coding Conjecture} 
\newcommand{\DS}{{\cal D}}
\newcommand{\Adv}{\mathbf{a}}
\newcommand{\Advf}[1]{\Adv_#1}
\newcommand{\Sp}{s}
\newcommand{\Tm}{t}
\newcommand{\R}{\mathbb{R}}
\newcommand{\E}{\mathbb{E}}
\newcommand{\F}{\mathbb{F}}
\newcommand{\N}{\mathbb{N}}
\newcommand{\un}{\mathsf{un}}
\newcommand{\cF}{{\cal F}}
\newcommand{\cI}{{\cal I}}
\newcommand{\cT}{{\cal T}}
\newcommand{\PI}{\mathsf{Inv}}
\newcommand{\PE}{\mathsf{Eval}}
\newcommand{\PInt}{\mathsf{Interp}}
\newcommand{\cp}{\mathsf{cell}}
\newtheorem{fact}{Fact}
\newcommand{\problem}[5]{

\smallskip
\begin{tabular}{|l p{8cm}|}
	\hline
	\multicolumn{2}{|l|}{\textbf{#1}} \\
	\textit{Input:\ }&{#2}\\
	\textit{Preprocessing:\ } & {#3} \\
	\textit{Query:\ }&{#4}\\
	\textit{Answer:\ }&{#5}\\
	\hline
\end{tabular}
\smallskip

}
\begin{document}

\maketitle

\begin{abstract}
In this paper, we investigate the relative power of several conjectures that attracted recently lot of interest. We establish a connection between the Network Coding Conjecture ({\em NCC}) of Li and Li~\cite{Li10} and several data structure like problems such as non-adaptive function inversion of Hellman~\cite{Hellman80} and the well-studied problem of polynomial evaluation and interpolation. In turn these data structure problems imply super-linear circuit lower bounds for explicit functions such as integer sorting and multi-point polynomial evaluation.
\end{abstract}

\section{Introduction}
One of the central problems in theoretical computer science is proving lower bounds in various models of computation such as circuits and data structures. Proving super-linear size lower bounds for circuits even when their depth is restricted is rather elusive. Similarly, proving polynomial lower bounds on query time for certain static data structure problems seems out of reach. To deal with this situation researchers developed various conjectures which if true would imply the sought after lower bounds. In this paper, we investigate the relative power of some of those conjectures. We establish a connection between the Network Coding Conjecture ({\em NCC}) of Li and Li~\cite{Li10} used recently to prove various lower bounds such as lower bounds on circuit size counting multiplication~\cite{Afshani19} and a number of IO operations for external memory sorting \cite{Farhadi19}.

Another problem researchers looked at is a certain data structure type problem for function inversion~\cite{Hellman80} which is popular in cryptography.
Corrigan-Gibbs and Kogan \cite{Gibbs19} observed that lower bounds for the function inversion problem imply lower bounds for logarithmic depth circuits.
In this paper we establish new connections between the problems, and identify some interesting instances.
Building on the work of Afshani et al.~\cite{Afshani19} we show that the Network Coding Conjecture implies certain weak lower bounds for the inversion data structure problems. That in turn implies the same type of circuit lower bounds as given by Corrigan-Gibbs and Kogan~\cite{Gibbs19}.
We show that similar results apply to a host of other data structure problems such as the well-studied polynomial evaluation problem or the Finite Field Fourier transform problem. 
Corrigan-Gibbs and Kogan~\cite{Gibbs19} gave their circuit lower bound for certain apriori undetermined function. We establish the same circuit lower bounds for sorting integers which is a very explicit function. Similarly, we establish a connection between data structure for polynomial evaluation and circuits for multi-point polynomial evaluation. Our results sharpen and generalize the picture emerging in the literature.

The data structure problems we consider in this paper are for static, non-adaptive, systematic data structure problems, a very restricted class of data structures for which lower bounds should perhaps be easier to obtain. Data structure problems we consider have the following structure: Given the input data described by $N$ bits, create a data structure of size $s$. Then we receive a single query from a set of permissible queries and we are supposed to answer the query while non-adaptively inspecting at most $t$ locations in the data structure and in the original data. The non-adaptivity means that the inspected locations are chosen only based on the query being answered but not on the content of the inspected memory.
We show that when $s \ge \omega\bigl(N \log N/\log \log N\bigr)$, polynomial lower bounds on $t$ for certain problems would imply super-linear lower bounds on 
log-depth circuits for computing sorting, multi-point polynomial evaluation, and other problems. 

We show that logarithmic lower bounds on $t$ for the data structures can be derived from the Network Coding Conjecture even in the more generous setting of $s \ge \varepsilon N$ and when inspecting locations in the data structure is for free. This matches the lower bounds of Afshani~\cite{Afshani19} for certain circuit parameters derived from the Network Coding Conjecture.
One can recover the same type of result they showed from our connection between the Network Coding Conjecture, data structure lower bounds, and circuit lower bounds. 

In this regard, the Network Coding Conjecture seems the strongest among the conjectures, which is the hardest to prove. One would hope that for the strongly restricted data structure problems, obtaining the required lower bounds should be within our reach.

\smallskip\noindent {\bf Organization.} This paper is organized as follows. In the next section we review the data structure problems we consider. Then we provide a precise definition of Network Coding Conjecture in Section~\ref{sec:NCC}. Section~\ref{sec:results} contains the statement of our main results. In Sections~\ref{sec:NCC_function_inversion} and~\ref{sec:NCC_polynomial} we prove our main result for the function inversion and the polynomial problems. In Section~\ref{sec:valiant} we discus the connection between data structure and circuit lower bounds for explicit functions. 

\section{Data Structure Problems}

In this paper, we study lower bounds on \emph{systematic data structures} for various problems -- function inversion, polynomial evaluation, and polynomial interpolation.
We are given an input $I = \{x_0,\dots,x_{n-1}\}$, where each $x_i \in [n] = \{0,\dots,n-1\}$ or each $x_i$ is an element of some field $\F$.
First, a data structure algorithm can preprocess $I$ to produce an advice string $\Advf{I}$ of $\Sp$ bits (we refer to the parameter $\Sp$ as \emph{space} of the data structure $\DS$).
Then, we are given a query $q$ and the data structure should produce a correct answer (what is a correct answer depends on the problem).
To answer a query $q$, the data structure $\DS$ has access to the whole advice string $\Advf{I}$ and can make $\Tm$ queries to the input $I$, i.e., read at most $\Tm$ elements from $I$.
We refer to the parameter $\Tm$ as query time of the data structure.

We consider non-uniform data structures as we want to provide connections between data structures and non-uniform circuits.
Formally, a non-uniform systematic data structure $\DS_n$ for an input $I = \{x_0,\dots,x_{n-1}\}$ is a pair of algorithms $({\cal P}_n, {\cal Q}_n)$ with oracle access to $I$.
The algorithm ${\cal P}_n$ produces the advice string $\Advf{I} \in \{0,1\}^\Sp$.
The algorithm ${\cal Q}_n$ with inputs $\Advf{I}$ and a query $q$ outputs a correct answer to the query $q$ with at most $\Tm$ oracle queries to $I$.
The algorithms ${\cal P}_n$ and ${\cal Q}_n$ can differ for each $n \in \N$.

\subsection{Function Inversion}
In the function inversion problem, we are given a function $f: [n] \to [n]$ and a point $y \in [n]$ and we want to find $x \in [n]$ such that $f(x) = y$.
This is a central problem in cryptography as many cryptographic primitives rely on the existence of a function that is hard to invert.
To sum up we are interested in the following problem.
\problem{Function Inversion}
{A function $f: [n] \to [n]$ as an oracle.}
{Using $f$, prepare an advice string $\Advf{f} \in \{0,1\}^\Sp$.}
{Point $y \in [n]$.}
{Compute the value $f^{-1}(y)$, with a full access to $\Advf{f}$ and using at most $\Tm$ queries to the oracle for $f$.}

We want to design an efficient data structure, i.e., make $\Sp$ and $\Tm$ as small as possible.
There are two trivial solutions.
The first one is that the whole function $f^{-1}$ is stored in the advice string $\Advf{f}$, thus $\Sp = O(n \log n)$ and $\Tm = 0$.
The second one is that the whole function $f$ is queried during answering a query $y \in [n]$, thus $\Tm = O(n)$ and $\Sp = 0$.
Note that the space $\Sp$ of the data structure is the length of the advice string $\Advf{f}$ in bits, but with one oracle-query $x_i$ the data structure reads the whole $f(x_i)$, thus with $n$ oracle-queries we read the whole description of $f$, i.e., $n \log n$ bits. 

The question is whether we can design a data structure with $\Sp,\Tm \leq o(n)$.
Hellman~\cite{Hellman80} gave the first non-trivial solution and introduced a randomized systematic data structure which inverts a function with a constant probability (over the uniform choice of the function $f$ and the query $y \in [n]$) and $\Sp = O\left(n^{2/3} \log n\right)$ and $\Tm = O \left(n^{2/3} \log n \right)$.
Fiat and Naor~\cite{Fiat99} improved the result and introduced a data structure that inverts any function at any point, however with a slightly worse trade-off: $\Sp^3 \Tm = O\left(n^3 \log n\right)$.
Hellman~\cite{Hellman80} also introduced a more efficient data structure for inverting a permutation -- it inverts any permutation at any point and $\Sp\Tm = O(n\log n)$. 
Thus, it seems that inverting a permutation is an easier problem than inverting an arbitrary function.

In this paper, we are interested in lower bounds for the inversion problem.
Yao~\cite{Yao90} gave a lower bound that any systematic data structure for the inversion problem must have $\Sp\Tm \geq \Omega(n\log n)$, however, the lower bound is applicable only if $\Tm \leq O(\sqrt{n})$.
Since then, only slight progress was made.
De et al.~\cite{De10} improved the lower bound of Yao~\cite{Yao90} that it is applicable for the full range of $\Tm$.
Abusalah et al.~\cite{Abusalah17} improved the trade-off, that for any $k$ it must hold that $\Sp^k\Tm \geq \Omega\left(n^k\right)$.
Seemingly, their result contradicts Hellman's trade-off $\left(\Sp = \Tm = O\left(n^{2/3} \log n\right)\right)$ as it implies $s = t \geq n^{k/k+1}$ for any $k$.
However, for Hellman's attack~\cite{Hellman80} we need that the function can be efficiently evaluated and the functions introduced by Abusalah et al.~\cite{Abusalah17} cannot be efficiently evaluated.
There is also a series of papers~\cite{Gennaro05,Unruh07,Dodis17,Coretti18} which study how the probability of successful inversion depends on the parameters $\Sp$ and $\Tm$.
However, none of these results yields a better lower bound than $\Sp\Tm \geq \Omega(n \log n)$.
Hellman's trade-off is still the best known upper bound trade-off for the inversion problem.
Thus, there is still a substantial gap between the lower and upper bounds.

Another caveat of all known data structures for the inversion is that they heavily use adaptivity during answering queries $y \in [n]$.
I.e., queries to the oracle depend on the advice string $\Adv$ and answers to the oracle queries which have been already made.
We are interested in non-adaptive data structures.
We say a systematic data structure is \emph{non-adaptive} if all oracle queries depend only on the query $y \in [n]$.

As non-adaptive data structures are weaker than adaptive ones, there is a hope that for non-adaptive data structures we could prove stronger lower bounds.
Moreover, the non-adaptive data structure corresponds to circuits computation~\cite{valiant1977graph,valiant1983exponential,Viola09,Gibbs19}.
Thus, we can derive a circuit lower bound from a strong lower bound for a non-adaptive data structure.
Non-adaptive data structures were considered by Corrigan-Gibbs and Kogan~\cite{Gibbs19}.
They proved that improvement by a polynomial factor of Yao's lower bound~\cite{Yao90} for non-adaptive data structures would imply the existence of a function $F: \{0,1\}^N \to \{0,1\}^N$ for $N = n \log n$  that cannot be computed by a linear-size and logarithmic-depth circuit.
More formally, they prove that if a function $f: [n] \to [n]$ cannot be inverted by a non-adaptive data structure of space $O\left(n \log n / \log \log n\right)$ and query time $O(n^\varepsilon)$ for some $\varepsilon > 0$ then there exists a function $F: \{0,1\}^N \to \{0,1\}^N$  that cannot be computed by any circuit of size $O(N)$ and depth $O(\log N)$.
They interpret $r \in \{0,1\}^N$ as $n$ numbers in $[n]$, i.e, $r = (r_1,\dots, r_n) \in \{0,1\}^N$ where each $r_i \in [n]$.
The function $F$ is defined as $F(y) = F(y_1,\dots,y_n) = \bigl(f^{-1}(y_1),\dots.f^{-1}(y_n)\bigr)$ where $f^{-1}(y_i) = \min \bigl\{ x \in [n] \mid f(x) = y \bigr\}$ and $\min \emptyset = 0$.
Informally, if the function $f$ is hard to invert at some points, then it is hard to invert at all points together.
Moreover, they showed equivalence between function inversion and substring search.
A data structure for the function inversion of space $\Sp$ and query time $\Tm$ yields a data structure of space $O(\Sp \log \Sp)$ and query time $O(\Tm \log \Tm)$ for finding pattern of length $O(\log n)$ in a binary text of length $O(n \log n)$ and vice versa -- an efficient data structure for the substring search would yield an efficient data structure for the function inversion.
Compared to results of Corrigan-Gibbs and Kogan~\cite{Gibbs19}, we provide an explicit function (sorting integers) which will require large circuits if any of the functions $f$ is hard to invert.

Another connection between data structures and circuits was made by Viola~\cite{Viola19} who considered constant depth circuits with arbitrary gates.

\subsection{Evaluation and Interpolation of Polynomials}
\label{sec:Polynomials}
	In this section, we describe two natural problems connected to polynomials.
	We consider our problems over a finite field $\F$ to avoid issues with encoding reals.

	\problem{Polynomial Evaluation over $\F$}
	{Coefficients of a polynomial $p \in \F[x]$: $\alpha_0, \ldots, \alpha_{n-1} \in \F$ (i.e., $p(x) = \sum_{i \in [n]} \alpha_i x^{i}$)}
	{Using the input, prepare an advice string $\Advf{p} \in \{0,1\}^\Sp$.}
	{A number $x\in \F$.}
        {Compute the value $p(x)$, with a full access to $\Advf{p}$ and using at most $\Tm$ queries to the coefficients of $p$.}

	\problem{Polynomial Interpolation over $\F$}
	{Point-value pairs of a polynomial $p \in \F[x]$ of degree at most $n-1$: $\bigl(x_0, p(x_0)\bigr), \ldots, \bigl(x_{n-1}, p(x_{n-1})\bigr) \in \F \times \F$ where $x_i \neq x_j$ for any two indices $i \neq j$}
	{Using the input, prepare an advice string $\Advf{p} \in \{0,1\}^\Sp$.}
	{An index $j \in [n]$.}
        {Compute $j$-th coefficient of the polynomial $p$, i.e., the coefficient of $x^{j}$ in $p$, with a full access to $\Advf{p}$ and using at most $\Tm$ queries to the oracle for point-value pairs.}
				In the paper we often use a version of polynomial interpolation where the points $x_0, x_1, \ldots, x_{n-1}$ are fixed in advance and the input consists just of $p(x_0), p(x_1), \ldots, p(x_{n-1})$.
				Since we are interested in lower bounds, this makes our results slightly stronger.

	Let $\F = \textsc{GF}(p^k)$ denote the Galois Field of $p^k$ elements.
	Let $n$ be a divisor of $p^k - 1$.
	It is a well-known fact that for any finite field $\F$ its multiplicative group $\F^*$ is cyclic (see e.g. Serre~\cite{serre2012course}).
	Thus, there is an element $\sigma \in \F$ of order $n$ in the multiplicative group $\F^*$ (that is an element $\sigma$ such that $\sigma^n = 1$ and for each $1 \leq j < n$, $\sigma^j \neq 1$).
	In other words, $\sigma$ is our choice of primitive $n$-th root of unity.
	Pollard~\cite{pollard1971fast} defines the \emph{Finite Field Fourier transform} (FFFT) (with respect to $\sigma$) as a linear function $\text{FFFT}_{n,\sigma} \colon \F^n \rightarrow \F^n$ which satisfies:
	\begin{align*}
		\text{FFFT}_{n,\sigma}(\alpha_0, \ldots, \alpha_{n-1}) &= (\beta_0, \ldots, \beta_{n-1}) \text{ where } \\
		\beta_i &= \sum_{j \in [n]} \alpha_j \sigma^{ij} & \text{for any $i \in [n]$}
	\end{align*}
	The inversion $\text{FFFT}_{n,\sigma}^{-1}$ is given by:
	\begin{align*}
		\text{FFFT}_{n,\sigma}^{-1}(\beta_0, \ldots, \beta_{n-1}) &= (\alpha_0, \ldots, \alpha_{n-1}) \text{ where } \\
		\alpha_i &= \frac{1}{n} \sum_{j \in [n]} \beta_{j} \sigma^{-ij} & \text{for any $i \in [n]$}
	\end{align*}
Note that if we work over a finite field $\F$, our $n$ might not be an element of $\F$.
For simplicity we slightly abuse the notation and use $\frac{1}{n} = \left( \sum_{i=1}^n 1 \right)^{-1}$.  
In our theorems we always set $n$ to be a divisor of $|\F| - 1 = p^k - 1$ thus $n$ modulo $p$ is non-zero and the inverse exists.
	Observe, that $\text{FFFT}_{n,\sigma}^{-1} = \frac{1}{n} \text{FFFT}_{n, \sigma^{-1}}$.
        Hence, FFFT is the finite field analog of Discrete Fourier transform (DFT) which works over complex numbers.

	The FFT algorithm by Cooley and Tukey~\cite{cooley1965algorithm} can be used for the case of finite fields as well (as observed by Pollard~\cite{pollard1971fast}) to get an algorithm using $O(n \log n)$ field operations (addition or multiplication of two numbers).
	Thus we can compute $\text{FFFT}_{n, \sigma}$ and its inverse in $O(n \log n)$ field operations.

	It is easy to see that $\text{FFFT}_{n, \sigma}$ is actually evaluation of a polynomial in multiple special points (specifically in $\sigma^0, \ldots, \sigma^{n-1}$).
	We can also see that it is a special case of interpolation by a polynomial in multiple special points since $\text{FFFT}_{n,\sigma}^{-1} = \frac{1}{n} \text{FFFT}_{n, \sigma^{-1}}$.
	We provide an NCC-based lower bound for data structures computing the polynomial evaluation. 
	However, we use the data structure only for evaluating a polynomial in powers of a primitive root of unity. 
	Thus, the same proof yields a lower bound for data structures computing the polynomial interpolation.


	There is a great interest in data structures for polynomial evaluation in a cell probe model.
In this model, some representation of a polynomial $p = \sum_{i \in [n]} \alpha_i x^i \in \F[x]$ is stored in a table $\cT$ of $\Sp_\cp$ cells, each of $w$ bits.
Usually, $w$ is set to $O\bigl(\log |\F|\bigr)$, that we can store an element of $\F$ in a single cell.
On a query $x \in \F$ the data structure should output $p(x)$ making at most $\Tm_\cp$ probes to the table $\cT$.
A difference between data structures in the cell probe model and systematic data structures is that a data structure in the cell probe model is charged for any probe to the table $\cT$ but a systematic data structure is charged only for queries to the input (the coefficients $\alpha_i$), reading from the advice string $\Advf{p}$ is for free.
Note that, the coefficients $\alpha_i$ of $p$ do not have to be even stored in the table $\cT$.
There are again two trivial solutions.
The first one is that we store a value $p(x)$ for each $x \in \F$ and on a query $x \in \F$ we probe just one cell.
Thus, we would get $\Tm_\cp = 1$ and $\Sp_\cp = |\F|$ (we assume that we can store an element of $\F$ in a single cell).
The second one is that we store the coefficients of $p$ and on a query $x \in \F$ we probe all cells and compute the value $p(x)$.
Thus, we would get $\Tm_\cp = \Sp_\cp = n$.

Let $k = \log |\F|$.
Kedlaya and Umans~\cite{Kedlaya08} provided a data structure for the polynomial evaluation that uses space $n^{1 + \varepsilon}\cdot k^{1 + o(1)}$ and query time $\log^{O(1)} n \cdot k^{1 + o(1)}$.
Note that, $n \cdot k$ is the size of the input and $k$ is the size of the output.

The first lower bound for the cell probe model was given by Miltersen~\cite{Miltersen95}.
He proved that for any cell probe data structure for the polynomial evaluation it must hold that $\Tm_\cp \geq \Omega\bigl(k/\log \Sp_\cp\bigr)$.
This was improved by Larsen~\cite{Larsen12} to $\Tm_\cp \geq \Omega\bigl(k/\log(\Sp_\cp w/nk)\bigr)$, that gives $\Tm_\cp \geq \Omega(k)$ if the data structure uses linear space $\Sp_\cp\cdot w = O(n\cdot k)$. However, the size of $\F$ has to be super-linear, i.e., $|\F| \geq n^{1 + \Omega(1)}$.
Data structures in a bit probe model were studied by G{\'a}l and Miltersen~\cite{Gal03}.
The bit probe model is the same as the cell probe model but each cell contains only a single bit, i.e., $w = 1$.
They studied succinct data structures that are data structures such that $\Sp_\cp = (n + r)\cdot k$ for $r < o(n)$.
Thus, the succinct data structures are related to systematic data structures but still, the succinct data structures are charged for any probe (as any other data structure in the cell probe model).
Note that a succinct data structure stores only a few more bits than it is needed due to information-theoretic requirement.
G{\'a}l and Miltersen~\cite{Gal03} showed that for any succinct data structure in the bit probe model it holds that $r \cdot \Tm_\cp \geq \Omega(n \cdot k)$.
We are not aware of any lower bound for systematic data structures for the polynomial evaluation.

Larsen et al.~\cite{Larsen18} also gives a log-squared lower bound for dynamic data structures in the cell probe model. Dynamic data structures also support updates of the polynomial $p$.

There is a great interest in algorithmic questions about the polynomial interpolation such as how fast we can interpolate polynomials~\cite{Gathen13,BenOr88,Grigoryev90}, how many queries we need to interpolate a polynomial if it is given by oracle~\cite{Clausen91,Ivanyos18}, how to compute the interpolation in a numerically stable way over infinite fields~\cite{Smoktunowicz07} and many others.
However, we are not aware of any results about data structures for the interpolation, i.e., when the interpolation algorithm has an access to some precomputed advice.

\section{Network Coding}
\label{sec:NCC}

We prove our conditional lower bounds based on the Network Coding Conjecture. 
In network coding, we are interested in how much information we can send through a given network.
A \emph{network} consists of a graph $G = (V,E)$, positive capacities of edges $c: E \to \R^+$ and $k$ pairs of vertices $(s_0,t_0),\dots,(s_{k-1},t_{k-1})$.
We say a network $R = \bigl(G,c,(s_i,t_i)_{i \in [k]}\bigr)$ is \emph{undirected} or \emph{directed (acyclic)} if the graph $G$ is undirected or directed (acyclic).
We say a network is $\emph{uniform}$ if the capacities of all edges in the network equal to some $q \in \R^+$ and we denote such network as $\bigl(G,q,(s_i,t_i)_{i \in [k]}\bigr)$.

A goal of a coding scheme for directed acyclic network $R = \bigl(G,c,(s_i,t_i)_{i \in [k]}\bigr)$ is that at each target $t_i$ it will be possible to reconstruct an input message $w_i$ which was generated at the source $s_i$.
The coding scheme specifies messages sent from each vertex along the outgoing edges as a function of received messages.
Moreover, the length of the messages sent along the edges have to respect the edge capacities.

More formally, each source $s_i$ of a network receives an input message $w_i$ sampled (independently of the messages for the other sources) from the uniform distribution $\mathbf{W}_i$ on a set $W_i$.
Without loss of generality we can assume that each source $s_i$ has an in-degree 0 (otherwise we can add a vertex $s'_i$ and an edge $(s'_i, s_i)$ and replace $s_i$ by $s'_i$).
There is an alphabet $\Sigma_e$ for each edge $e \in E(G)$.
For each source $s_i$ and each outgoing edge $e = (s_i,u)$ there is a function $f_{s_i,e}: W_i \to \Sigma_e$ which specifies the message sent along the edge $e$ as a function of the received input message $w_i \in W_i$.
For each non-source vertex $v \in V, v \neq s_i$ and each outgoing edge $e = (v,u)$ there is a similar function $f_{v,e}: \prod_{e' = (u',v)} \Sigma_{e'} \to \Sigma_e$ which specifies the message sent along the edge $e$ as a function of the messages sent to $v$ along the edges incoming to $v$.
Finally, each target $t_i$ has a decoding function $d_i: \prod_{e' = (u',t_i)} \Sigma_{e'} \to W_i$.
The coding scheme is executed as follows:
\begin{enumerate}
 \item Each source $s_i$ receives an input message $w_i \in W_i$. Along each edge $e = (s_i,u)$ a message $f_{s_i,e}(w_i)$ is sent.
 \item When a vertex $v$ receives all messages $m_1,\dots,m_a$ along all incoming edges $(u',v)$ it sends along each outgoing edge $e = (v,u)$ a message $f_{v,e}(m_1,\dots,m_a)$.
 As the graph $G$ is acyclic, this procedure is well-defined and each vertex of non-zero out-degree will eventually send its messages along its outgoing edges.
 \item At the end, each target $t_i$ computes a string $\tilde{w}_i = d_i(m'_1,\dots,m'_b)$ where $m'_j$ denotes the received messages along the incoming edges $(u',t_i)$. We say the encoding scheme is \emph{correct} if $\tilde{w}_i = w_i$ for all $i \in [k]$ and any input messages $w_0,\dots,w_{k-1} \in W_0 \times \cdots \times W_{k-1}$.
\end{enumerate}
The coding scheme has to respect the edge capacities, i.e., if $\mathbf{M}_e$ is a random variable that represents a message sent along the edge $e$, then $H(\mathbf{M}_e) \leq c(e)$, where $H(\cdot)$ denotes the Shannon entropy.
A \emph{coding rate} of a network $R$ is the maximum $r$ such that there is a correct coding scheme for input random variables $\mathbf{W}_0,\dots,\mathbf{W}_{k-1}$ where  $H(\mathbf{W}_i) = \log |W_i| \geq r$ for all $i \in [k]$.
A network coding can be defined also for directed cyclic networks or undirected networks but we will not use it here.

Network coding is related to multicommodity flows.
A multicommodity flow for an undirected network $\bar{R} = \bigl(\bar{G}, c, (s_i, t_i)_{i \in [k]}\bigr)$ specifies flows for each commodity $i$ such that they transport as many units of commodity from $s_i$ to $t_i$ as possible.
A flow of the commodity $i$ is specified by a function $f^i: V \times V \to \R_0^+$ which describes for each pair of vertices $(u,v)$ how many units of the commodity $i$ are sent from $u$ to $v$.
Each function $f^i$ has to satisfy:
\begin{enumerate}
 \item If $u,v$ are not connected by an edge, then $f^i(u,v) = f^i(v,u) = 0$.
 \item For each edge $\{u,v\} \in E(\bar{G})$, it holds that $f^i(u,v) = 0$ or $f^i(v,u) = 0$.
 \item For each vertex $v$ that is not the source $s_i$ or the target $t_i$, it holds that what comes to the vertex $v$ goes out from the vertex $v$, i.e.,
 \[
\sum_{u \in V} f^i(u,v) = \sum_{u \in V} f^i(v,u).  
 \]
 \item What is sent from the source $s_i$ arrives to the target $t_i$, i.e., 
 \[
\sum_{u \in V} f^i(s_i,u) - f^i(u,s_i) = \sum_{u \in V} f^i(u,t_i) - f^i(t_i,u).  
 \]
\end{enumerate}
Moreover, all flows together have to respect the capacities, i.e., for each edge $e = \{u,v\} \in E(\bar{G})$ it must hold that $\sum_{i \in [k]} f^i(u,v) + f^i(v,u) \leq c(e)$.
A \emph{flow rate} of a network $\bar{R}$ is the maximum $r$ such that there is a multicommodity flow $F = (f^0,\dots,f^{k-1})$ that for each $i$ transports at least $r$ units of the commodity $i$ from $s_i$ to $t_i$, i.e., for all $i$, it holds that $\sum_{u \in V} f^i(u,t_i) - f^i(t_i,u) \geq r$.
A multicommodity flow for directed graphs is defined similarly, however, the flows can transport the commodities only in the direction of edges.

Let $R$ be a directed acyclic network of a flow rate $r'$.
It is clear that for a coding rate $r$ of $R$ it holds that $r \geq r'$.
As we can send the messages without coding and thus reduce the encoding problem to the flow problem.
The opposite inequality does not hold: There is a directed network $R = \bigl(G,c,(s_i,t_i)_{i \in [k]}\bigr)$ such that its coding rate is $\Omega\bigl(|V(G)|\bigr)$-times larger than its flow rate as shown by Adler et al.~\cite{Adler06}.
Thus, the network coding for directed networks provides an advantage over the simple solution given by the maximum flow.
However, such a result is not known for undirected networks.
Li and Li~\cite{Li10} conjectured that the network coding does not provide any advantage for undirected networks, thus for any undirected network $\bar{R}$, the coding rate of $\bar{R}$ equals to the flow rate of $\bar{R}$.
This conjecture is known as \emph{Network Coding Conjecture} (NCC) and we state a weaker version of it below.

For a directed graph $G = (V,E)$ we denote by $\un(G)$ the undirected graph $(V, \bar{E})$ obtained from $G$ by making each directed edge in $E$ undirected (i.e., replacing each $(u,v) \in E(G)$ by $\{u,v\}$). 
For a directed acyclic network $R = \bigl(G,c,(s_i,t_i)_{i \in [k]}\bigr)$ we define the undirected network $\un(R) = \bigl(\un(G), \bar{c}, (s_i,t_i)_{i \in [k]}\bigr)$ by keeping the source-target pairs and capacities the same, i.e, $c\bigl((u,v)\bigr) = \bar{c}\bigl(\{u,v\}\bigr)$.

\begin{conjecture}[Weaker NCC]
\label{conj:NCC}
 Let $R$ be a directed acyclic network, $r$ be a coding rate of $R$ and $\bar{r}$ be a flow rate of $\un(R)$.
 Then, $r = \bar{r}$.
\end{conjecture}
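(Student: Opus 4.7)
The statement is essentially the Network Coding Conjecture specialized to directed acyclic networks and their undirected relaxations, which has been open since Li and Li~\cite{Li10}, so any proof plan is inherently speculative. The natural plan is to split the equality $r = \bar r$ into its two inequalities and handle them separately.

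The inequality $r \geq \bar r$ would follow from a path-routing argument in the generic case where an optimal undirected multicommodity flow for $\un(R)$ can be decomposed into paths that respect the orientations of $G$: sending a message along a directed $s_i$-to-$t_i$ path is a special case of coding, so $r$ is at least the directed flow rate. Handling undirected flows that use edges ``against'' their orientation in $G$ would seem to require genuinely non-routing coding gadgets (in the spirit of the butterfly construction) that emulate reverse transmission through combinations of forward messages, and I do not see a clean way to produce such gadgets in full generality.

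The hard direction is $r \leq \bar r$. The plan I would follow is the standard cut-set entropy approach: for each source-target pair $(s_i, t_i)$ and each edge cut $C$ in $\un(R)$ separating $s_i$ from $t_i$, Shannon inequalities (together with the acyclic decoding order and the fact that $\mathbf{W}_i$ is uniform and decodable at $t_i$) imply that $H(\mathbf{W}_i)$ is bounded by the joint entropy of the messages $\mathbf{M}_e$ crossing $C$, which is in turn bounded by the total capacity of $C$. Aggregating these bounds across commodities and applying LP duality against the multicommodity flow LP should morally give an upper bound that matches $\bar r$.

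The main obstacle, and the reason NCC has resisted proof for over a decade, is precisely that cut-set entropy bounds are tight only for a single commodity (Menger/max-flow-min-cut), whereas multicommodity flow exhibits a duality gap which plain Shannon inequalities cannot close. A real proof would likely require new entropic inequalities beyond the Shannon cone that exploit the acyclic orientation of $G$ and the simultaneous decoding constraints at all targets. Since I cannot plausibly close this gap here, the realistic strategy, and the one the paper takes, is to adopt \Cref{conj:NCC} as a hypothesis and pursue its downstream consequences for data structure and circuit lower bounds.
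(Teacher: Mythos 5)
This statement is a conjecture (the weaker form of Li and Li's Network Coding Conjecture), and the paper gives no proof of it --- it is adopted purely as a hypothesis, exactly as you conclude, so your assessment matches the paper. Note only that the downstream arguments use just the direction $r \le \bar{r}$ (coding rate of $R$ bounded by the flow rate of $\un(R)$), so the difficulties you raise about establishing $r \ge \bar{r}$ do not affect how the conjecture is applied in the paper.
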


This conjecture was used to prove a conditional lower bound for sorting algorithms with an external memory~\cite{Farhadi19} and for circuits multiplying two numbers~\cite{Afshani19}.

\section{NCC Implies Data Structure Lower Bounds}
\label{sec:results}

In this paper, we provide several connections between lower bounds for data structures and other computational models.
The first connection is that NCC (Conjecture~\ref{conj:NCC}) implies lower bounds for data structures for the permutation inversion and the polynomial evaluation and interpolation.
Assuming NCC, we show that a query time $\Tm$ of a non-adaptive systematic data structure for any of the above problems satisfies $\Tm \geq \Omega\bigl(\log n / \log \log n\bigr)$, even if it uses linear space, i.e., the advice string $\Adv$ has size $\varepsilon n \log n$ for sufficiently small constant $\varepsilon > 0$.
Formally, we define $\Tm_\PI(\Sp)$ as a query time of the optimal non-adaptive systematic data structure for the permutation inversion using space at most $\Sp$.
Similarly, we define $\Tm^\F_\PE(\Sp)$ and $\Tm^\F_\PInt(\Sp)$ for the polynomial evaluation and interpolation over $\F$.

\begin{restatable}{theorem}{NccPerm}
 \label{thm:NCC_PI}
 Let $\varepsilon > 0$ be a sufficiently small constant.
 Assuming NCC, it holds that \item $\Tm_\PI(\varepsilon n \log n) \geq \Omega\bigl(\log n/\log \log n\bigr)$.
\end{restatable}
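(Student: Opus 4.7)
I plan to prove Theorem~\ref{thm:NCC_PI} by reducing to the Network Coding Conjecture via a network constructed from the data structure. Given a non-adaptive systematic data structure $\DS$ for permutation inversion with space $\Sp \le \varepsilon n \log n$ and query time $\Tm$, I build a directed acyclic network $R$ as follows: $n$ source nodes $s_i$ with $s_i$ holding the value $\sigma(i)$ of an input permutation $\sigma$; an encoder node $E$ and an advice node $A$ joined by a bottleneck edge $E \to A$ of capacity $\Sp$ that models the $\Sp$-bit advice string; $n$ query nodes $Q_y$; and $n$ target nodes $t_y$ paired with sources via $(s_i, t_{\sigma(i)})$. The edges have capacity $\log n$ on $s_i \to E$ and $Q_y \to t_y$, capacity $\Sp$ on $A \to Q_y$ for each $y$, and capacity $\log n$ on $s_i \to Q_y$ whenever the non-adaptive schedule for query $y$ reads coordinate $i$.

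The data structure then immediately induces a correct coding scheme on $R$ of rate $\log n$: each source forwards $\sigma(i)$ to $E$; the encoder runs the preprocessing algorithm $\mathcal{P}$ to produce $\Advf{\sigma}$, which is broadcast through $A$; each $Q_y$ runs the query algorithm $\mathcal{Q}$ on the broadcast advice together with its $\Tm$ direct reads to compute $\sigma^{-1}(y)$, which is forwarded to $t_y$. Since each source-target pair transmits $\log n$ bits, the coding rate of $R$ is at least $\log n$, and by Conjecture~\ref{conj:NCC} the undirected flow rate of $\un(R)$ is also at least $\log n$.

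To finish, I would upper bound the undirected multicommodity flow rate of $\un(R)$ by a sparsest-cut argument. A naive min-cut separating $\{s_i\} \cup \{E\}$ from the rest has capacity $\Sp + n \Tm \log n$ over $n$ demand pairs, yielding expansion $O((\varepsilon + \Tm) \log n)$ and only the trivial consequence $\Tm \ge \Omega(1)$. To push this to $\log n / \log \log n$, I plan to average the sparsest-cut ratio over a family of balanced cuts obtained by restricting to random subsets of demand pairs of size $n/k$ with $k$ on the order of $\log n / \log \log n$. The advice bottleneck, shared across many pairs, should contribute only an amortized $\Sp/n$ per active pair in such cuts, while the sparse source-query edges contribute an amortized $O(\Tm \log n)$; balancing the two terms at $k \approx \log n / \log \log n$ yields an $O(\Tm \log \log n)$ upper bound on the flow rate, which combined with the NCC lower bound $\log n$ forces $\Tm \ge \Omega(\log n / \log \log n)$.

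The principal obstacle is carrying out this amortization rigorously in the undirected setting. Undirected multicommodity flow can route along complicated paths that use the $n$ parallel $A \to Q_y$ edges in both directions and that no directed coding scheme can exploit, so one cannot naively claim that all flow through $A$ is bounded by the $E \to A$ bottleneck. Following the template of Afshani et al.~\cite{Afshani19}, I expect that a Shearer-type entropy inequality combined with a union bound over the random balanced cuts controls the total useful flow through the advice node, reducing the undirected case to the cut-based analysis outlined above. Making that reduction precise is the delicate step that drives the $\log n / \log \log n$ factor in the final bound.
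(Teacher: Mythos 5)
There are two genuine gaps. First, your source--target pairing $(s_i, t_{\sigma(i)})$ depends on the input permutation, which is not allowed: in a network coding instance the pairs $(s_i,t_i)$ are part of the fixed network, and the coding rate quantifies over all inputs for that fixed network. With a single application of the data structure, the value $\sigma^{-1}(y)$ computed at the query node for $y$ is the \emph{index} of the source holding $y$, not the message of any predetermined source, so no fixed target can be guaranteed to reconstruct its partner's message. The paper's proof resolves exactly this by using the data structure \emph{twice} in sequence: a middle layer computes $f^{-1}(j)$, one then defines the permutation $h(j)=f^{-1}(j)+b$ and runs the inversion data structure again on $h$, so that the fixed target $u_{i+b}$ recovers $x_i$; the shift $b$ is chosen by an averaging argument (after deleting vertices of out-degree larger than $q\Tm$) so that at least a $9/10$ fraction of the fixed pairs $(s_i,u_{i+b})$ are at distance $\Omega(\log_{q\Tm} n)$ in the undirected graph. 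Without some such device your reduction does not produce a valid coding scheme at all.

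Second, modeling the advice as a broadcast node $A$ with capacity-$\Sp$ edges $A\to Q_y$ makes the flow-rate upper bound you need unobtainable by distance or cut arguments: in $\un(R)$ every source--target pair is within distance $4$ via $s_i\text{--}E\text{--}A\text{--}Q_y\text{--}t_y$, and, as you note, the natural cut only yields $\Tm\geq\Omega(1)$. The ``amortized sparsest-cut over random balanced cuts plus Shearer-type inequality'' step that is supposed to recover the $\log n/\log\log n$ factor is precisely the missing argument, and nothing in the proposal indicates how to control flow routed through the huge $A$--$Q_y$ edges. The paper avoids routing the advice altogether: it \emph{fixes} the two advice strings (and the inputs of the deleted high-degree vertices), so the scheme is correct only on a $2^{-(2\varepsilon+2/q+o(1))n\log n}$ fraction of inputs, i.e., it is an $(\varepsilon',r)$-encoding scheme; the passage from such partially correct schemes to NCC is handled by the correction-game machinery of Farhadi et al.\ and Afshani et al.\ (Lemma~\ref{lem:SufficientCoding}), where the supervisor vertex has provably small total capacity (expected message length at most $kr/4$), so only an $O(1)$ fraction of the flow can shortcut through it and the $(\delta,d)$-long structure of the remaining graph forces $|E(G)|/k\geq\delta' d$. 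Your construction would need to be reorganized along these lines (fixed pairing via double inversion, fixed advice plus fractional correctness, correction game) rather than patched by the cut-averaging heuristic.
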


\begin{restatable}{theorem}{NccPoly}
 \label{thm:NCC_PE}
 Let $\F$ be a field and $n$ be a divisor of $|\F| - 1$.
 Let $\Sp = \varepsilon n \log |\F|$ for a sufficiently small constant $\varepsilon > 0$.
 Then assuming NCC, it holds that $\Tm^\F_\PE(\Sp), \Tm^\F_\PInt(\Sp) \geq \Omega\bigl(\log n/\log \log n\bigr)$.
\end{restatable}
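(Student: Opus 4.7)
The plan is to mirror the proof of Theorem~\ref{thm:NCC_PI}, adapting it to the polynomial setting through the Finite Field Fourier Transform. First I would restrict attention to polynomial evaluation queries at the $n$-th roots of unity $\sigma^0, \ldots, \sigma^{n-1}$; any data structure for polynomial evaluation must in particular answer these queries, and the map $(\alpha_0, \ldots, \alpha_{n-1}) \mapsto (\beta_0, \ldots, \beta_{n-1})$ with $\beta_j = p(\sigma^j)$ is precisely $\text{FFFT}_{n,\sigma}$, a bijection. The identity $\text{FFFT}_{n,\sigma}^{-1} = \frac{1}{n}\text{FFFT}_{n,\sigma^{-1}}$ means that an evaluation DS (instantiated at root $\sigma^{-1}$) also solves interpolation at the roots of unity, so both halves of the theorem follow from the same network construction and we need only work with evaluation DS queries.

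Next I would build a directed acyclic network $R$ encoding two cascaded copies of the hypothetical DS. The network has $n$ source/target pairs $(s_i, t_i)$, where each source $s_i$ is fed an independent uniform message $w_i \in \F$ playing the role of the coefficient $\alpha_i$. The first copy of the DS turns $(\alpha_i)_i$ into $(\beta_j)_j$ at intermediate vertices $m_j$; the second copy (interpolation, realised as an evaluation DS with root $\sigma^{-1}$) takes the $m_j$'s as its input and reproduces $\alpha_i$ at target $t_i$. Each copy contributes one advice vertex with out-capacity $\Sp$ and $n$ query vertices, where each query vertex receives its stage's input along $\Tm$ incoming edges of capacity $\log|\F|$; all remaining edges carry a single field element. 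Since the composed DS is correct on every input, every $t_i$ recovers $w_i = \alpha_i$, so $R$ admits a coding scheme of rate $\log|\F|$, and the coding rate of $R$ is therefore at least $\log|\F|$.

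The main step---and the main obstacle---is to upper bound the undirected multicommodity flow rate of $\un(R)$ by $o(\log|\F|)$ and then invoke NCC (Conjecture~\ref{conj:NCC}) for a contradiction. Following the template of Afshani~\cite{Afshani19}, any flow must squeeze through the narrow waist formed by the two advice vertices together with the edges incident to the query vertices, whose total capacity is $O(\Sp + n\Tm \log|\F|)$ per stage. Because the FFFT thoroughly mixes every coefficient into every evaluation, for any balanced bipartition $[n] = S \sqcup \bar S$ all $\Omega(n)$ commodities with $s_i \in S$ and $t_i \in \bar S$ are forced across that waist, so the flow rate is bounded by the cut capacity divided by the number of crossing commodities. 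A naive single min-cut alone yields only $\Tm \geq \Omega(1)$; to obtain the sharper $\Omega(\log n/\log\log n)$ bound one has to iterate the argument over a butterfly/Benes-style hierarchy of $\Omega(\log n/\log\log n)$ nested partitions, each contributing its own $\Omega(n\log|\F|)$ of demand through the same waist edges. Calibrating this hierarchy---exploiting $\Sp \leq \varepsilon n\log|\F|$ for sufficiently small $\varepsilon$ so that the cumulative flow requirement outstrips the total waist capacity exactly when $\Tm = o(\log n/\log\log n)$---is the delicate part of the argument, and it is precisely the step where NCC is strictly stronger than a direct max-flow/min-cut bound.
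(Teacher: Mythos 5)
There is a genuine gap: the core of your argument---the upper bound on the flow rate of $\un(R)$---is both structurally flawed and not actually carried out. First, wiring the advice into the network as ``advice vertices'' of capacity $\Sp$ per stage defeats the whole approach: in the undirected network those vertices become high-capacity hubs, and every commodity can be routed along a length-$4$ path $s_i$ -- (stage-1 advice vertex) -- $m_j$ -- (stage-2 advice vertex) -- $t_i$, so the flow rate of $\un(R)$ is essentially $\log|\F|$ and NCC yields no contradiction at all. Second, your appeal to ``FFFT thoroughly mixes every coefficient into every evaluation'' to force commodities across a waist is not a valid argument about multicommodity flow: once NCC converts the coding rate into a flow rate, the bound must come purely from the topology of the query graph (degrees and source--target distances), not from semantic properties of the function being computed. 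Indeed, with your identity pairing ($t_i$ recovers $\alpha_i$) a data structure that, say, queries coefficient $j$ when evaluating at $\sigma^j$ creates short undirected paths between each $s_i$ and $t_i$, and nothing prevents a cheap flow. Finally, the step that should produce the $\Omega(\log n/\log\log n)$ factor is exactly the part you leave as a ``delicate'' butterfly/Benes-style hierarchy; no such hierarchical argument appears (or is needed) in the actual proof, and you give no way to make it rigorous.

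What the paper does instead: it does not put the advice into the network. It \emph{fixes} the two advice strings $\Advf{p},\Advf{p'}$ (and the inputs of pruned vertices), obtaining an encoding scheme that is correct only on a $2^{-(2\varepsilon+\frac{2}{q})n\log|\F|}$ fraction of inputs, i.e.\ an $(\varepsilon',r)$-encoding scheme; the passage from such partially correct schemes to a bona fide NCC instance is handled by Lemma~\ref{lem:SufficientCoding} via the $\cF$-correction game, where the auxiliary supervisor vertex has total incident capacity only about $\frac{3}{2}kr$, so it cannot carry much flow. The topological part then uses that the query graph has only $2\Tm n$ edges, prunes the set $W$ of vertices of out-degree exceeding $q\Tm$ (only $\frac{2}{q}n$ of them), and exploits the resulting degree bound to show balls of radius $d=\frac{1}{2}\log_{q\Tm}n$ are small; an averaging argument then yields a shift $b$ (implemented on the polynomial side by $h(j)=p(\sigma^j)\sigma^{jb}$, so that $u_{i+b}$ recovers $\alpha_i$) making the network $\bigl(\frac{9}{10},d\bigr)$-long. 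Counting capacity along these long routes gives $2\Tm\geq\Omega(d)$, i.e.\ $\Tm\log(q\Tm)\geq\Omega(\log n)$, hence $\Tm\geq\Omega(\log n/\log\log n)$. Your use of FFFT at roots of unity, the identity $\text{FFFT}_{n,\sigma}^{-1}=\frac{1}{n}\text{FFFT}_{n,\sigma^{-1}}$ to cover interpolation, and the idea of cascading the data structure twice all match the paper; but without fixing the advice (plus the correction-game lemma), pruning high-degree vertices, and the shift $b$ with the distance-counting argument, the proof does not go through.
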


Note that by Theorem~\ref{thm:NCC_PI}, assuming NCC, it holds that $\Sp \cdot \Tm  \geq \Omega\bigl(n \log^2 n/\log \log n\bigr)$ for $\Sp = \varepsilon n \log n$ and $\Tm = \Tm_\PI(\Sp)$.
The same holds for $\Tm^\F_\PE$ and $\Tm^\F_\PInt$ by Theorem~\ref{thm:NCC_PE}.
Thus, these conditional lower bounds cross the barrier $\Omega(n \log n)$ for $\Sp\cdot \Tm$ given by the best unconditional lower bounds known for the function inversion~\cite{Yao90,De10,Abusalah17,Gennaro05,Unruh07,Dodis17,Coretti18} and the lower bound for the succinct data structures for the polynomial evaluation by G\'{a}l and Miltersen~\cite{Gal03}.
The lower bound by Larsen~\cite{Larsen12} says that any cell probe data structure for the polynomial evaluation using linear space $\Sp_\cp = O(n \log n)$ needs at least logarithmic query time $\Tm_\cp \geq \Omega(\log n)$ if the size of the field is of super-linear size in $n$, i.e., $|\F| \geq n^{1 + \Omega(1)}$.
Then $\Sp_\cp \cdot \Tm_\cp \geq \Omega(n \log^2 n)$.
The lower bound given by Theorem~\ref{thm:NCC_PE} says that assuming NCC a non-adaptive data structure needs to read at least logarithmically many coefficients $\alpha_i$ of $p$ even if we know $\varepsilon n \log |\F|$ bits of information about the polynomial $p$ for free.
Our lower bound holds also for linear-size fields.

To prove Theorems~\ref{thm:NCC_PI} and~\ref{thm:NCC_PE}, we use the technique of Farhadi et al.~\cite{Farhadi19}.
The proof can be divided into two steps:
\begin{enumerate}
 \item From a data structure for the problem we derive a network $R$ with $O(\Tm n)$ edges such that $R$ admits an encoding scheme that is correct on a large fraction of the inputs. 
 This step is distinct for each problem and the reductions are shown in Sections~\ref{sec:NCC_function_inversion} and~\ref{sec:NCC_polynomial}.
 This step uses new ideas and interestingly, it uses the data structure twice in a sequence.
 \item If there is a network $R$ with $d n$ edges that admits an encoding scheme which is correct for a large fraction of inputs, then
$d \geq \Omega\bigl(\log n / \log \log n\bigr).$
 This step is common to all the problems.
 It was implicitly proved by Farhadi et al.~\cite{Farhadi19} and Afshani et al.~\cite{Afshani19}.
 For the sake of completeness, we give a proof of this step in Appendix~\ref{app:CorrectionGame}.
\end{enumerate}

\section{NCC Implies a Weak Lower Bound for the Function Inversion}
\label{sec:NCC_function_inversion}

In this section, we prove Theorem~\ref{thm:NCC_PI} that assuming NCC, any non-adaptive systematic data structure for the permutation inversion requires query time at least $\Omega\bigl(\log n / \log \log n\bigr)$ even if it uses linear space.
Let $\DS$ be a data structure for inverting permutations of a linear space $\Sp = \varepsilon n \log n$, for sufficiently small constant $\varepsilon < 1$, with query time $\Tm = \Tm_\PI(\Sp)$. 
Recall that $\Tm_\PI(\Sp)$ is a query time of the optimal non-adaptive systematic data structure for the permutation inversion using space $\Sp$.
From $\DS$ we construct a directed acyclic network $R = \bigl(G,c,(s_i,t_i)_{i \in [n]}\bigr)$ and an encoding scheme of a coding rate $\log n$.
By Conjecture~\ref{conj:NCC} we get that the flow rate of $\un(R) = \bigl(\bar{G},c,(s_i,t_i)_{i \in [n]}\bigr)$ is $\log n$ as well.
We prove that there are many source-target pairs of distance at least $\Omega(\log_\Tm n)$.
Since the number of edges of $\bar{G}$ will be $O(\Tm n)$ and flow rate of $\un(R)$ is $\log n$, we are able to derive a lower bound $\Tm \geq \Omega\bigl(\log n/\log \log n\bigr)$.

We construct the network $R$ in two steps.
First, we construct a network $R'$ that admits an encoding scheme $E'$ such that $E'$ is correct only on a substantial fraction of all possible inputs.
This might create correlations among messages received by the sources.
However, to use the Network Coding Conjecture we need to have a coding scheme that is able to reconstruct messages sampled from independent distributions.
To overcome this issue we use a technique introduced by Farhadi et al.~\cite{Farhadi19} and from $R'$ we construct a network $R$ that admits a correct encoding scheme.

Let $R = \bigl(G, c, (s_i,t_i)_{i \in [k]}\bigr)$ be a directed acyclic network.
Let each source receive a binary string of length $r$ as its input message, i.e., each $W_i = \{0,1\}^r$.
If we concatenate all input messages $w_i$ we get a string of length $r\cdot k$, thus the set of all possible inputs for an encoding scheme for $R$ corresponds to the set $\cI = \{0,1\}^{rk}$.
We say an encoding scheme is correct on an input $\bar{w} = (w_0,\dots,w_{k-1}) \in \cI$ if it is possible to reconstruct all messages $w_i$ at appropriate targets.
An $(\varepsilon,r)$-\emph{encoding scheme} is an encoding scheme which is correct on at least $2^{(1-\varepsilon)rk}$ inputs in $\cI$.

We say a directed network $R = \bigl(G,c,(s_i,t_i)_{i \in [k]}\bigr)$ is $(\delta, d)$-\emph{long} if for at least $\delta k$ source-target pairs $(s_i, t_i)$, it holds that distance between $s_i$ and $t_i$ in $\un(G)$ is at least $d$. Here, we measure the distance in the undirected graph $\un(G)$, even though the network $R$ is directed.
The following lemma is implicitly used by Farhadi et al.~\cite{Farhadi19} and Afshani et al.~\cite{Afshani19}.
We give its proof in Appendix~\ref{app:CorrectionGame} for the sake of completeness.
\begin{restatable}[Implicitly used in~\cite{Farhadi19, Afshani19}]{lemma}{SufficientCoding}
 \label{lem:SufficientCoding}
 Let $R = \bigl(G, q, (s_i,t_i)_{i \in [k]}\bigr)$ be a $(\delta, d)$-long directed acyclic uniform network for $\delta > \frac{5}{6}$ and sufficiently large $q \in \R^+$.
 Assume there is an $(\varepsilon,r)$-encoding scheme for $R$ for sufficiently small $\varepsilon$.
 Then assuming NCC, it holds that $\frac{|E(G)|}{k} \geq \delta'\cdot d$, where $\delta' = \frac{\delta - 5/6}{10}$.
\end{restatable}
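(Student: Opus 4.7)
The plan is to follow the correction-game strategy implicit in~\cite{Farhadi19, Afshani19}. An $(\varepsilon, r)$-encoding scheme for $R$ is correct on a set $I^{\text{good}} \subseteq \{0,1\}^{rk}$ of size at least $2^{(1-\varepsilon)rk}$, but incorrect on the rest. The first step is to upgrade this partial-correctness scheme into a fully correct one for an augmented network $R^+$ obtained from $R$ by attaching auxiliary edges of small total capacity (on the order of $\varepsilon r k$). Concretely, one introduces an auxiliary ``broadcaster'' vertex that receives all source messages, computes the correction information needed to disambiguate within the class of inputs that the original scheme confuses (which carries at most $\varepsilon r k$ bits of ambiguity), and transmits it to every target via dedicated auxiliary edges. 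By choosing the broadcaster's incident structure carefully, one ensures that the aggregate auxiliary capacity is $O(\varepsilon r k)$, and that each auxiliary edge is itself a long ``chain'' so that routing through the broadcaster does not provide a shortcut of length $< d$ for more than a small fraction of source-target pairs.

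Second, apply Conjecture~\ref{conj:NCC} to $R^+$: since the augmented scheme is a correct network code of rate $r$, the undirected version $\un(R^+)$ must admit a multicommodity flow $F$ in which commodity $i$ routes at least $r$ units from $s_i$ to $t_i$. Third, perform the ``total-work'' accounting. On one hand,
\[
W \;:=\; \sum_{i \in [k]} r \cdot \mathrm{dist}_{\un(R^+)}(s_i, t_i) \;\le\; q \cdot |E(G)| \;+\; O(\varepsilon r k),
\]
because the flow through each edge of capacity $c$ contributes at most $c$ to $W$, the original edges have capacity $q$, and the aggregate auxiliary capacity is $O(\varepsilon r k)$. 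On the other hand, among the $\delta k$ source-target pairs at distance $\ge d$ in $\un(R)$, the long-chain design of the broadcaster guarantees that at most a $\tfrac{5}{6}$-fraction (of $k$) become shortened in $\un(R^+)$; the surviving $(\delta - \tfrac{5}{6}) k$ long pairs still contribute at least $r \cdot (\delta - \tfrac{5}{6}) k \cdot d$ to $W$.

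Combining the two estimates, $r \cdot (\delta - \tfrac{5}{6}) k \cdot d \le q \cdot |E(G)| + O(\varepsilon r k)$. Dividing by $q k$ and using that $q$ is sufficiently large (so that $r = \Theta(q)$ is achievable) and $\varepsilon$ is sufficiently small (so that the additive $O(\varepsilon r k)$ term is absorbable into a constant-factor loss), one gets $|E(G)|/k \ge \delta' \cdot d$ with $\delta' = (\delta - 5/6)/10$, as claimed.

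The main obstacle will be verifying the two competing demands on the correction gadget simultaneously: (i) it must suffice to fix all $\varepsilon r k$ bits of input ambiguity, and (ii) its incident edges, once made undirected, must not create too many short alternative $s_i$-$t_i$ paths. The tension between these is precisely what produces the constants $\tfrac{5}{6}$ and the $10$ in $\delta'$; resolving it requires an averaging/Markov argument over source-target pairs to bound the number of pairs for which the broadcaster shortens their distance, together with a careful entropy computation to bound the auxiliary capacity actually needed.
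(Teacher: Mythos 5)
Your high-level skeleton (augment the partially correct scheme into a fully correct one via a low-capacity hub, apply NCC, then charge flow against long distances) matches the paper, but the two steps that carry the actual content would fail as you describe them. First, the correction gadget: you have the broadcaster send correction information only to the targets and claim the ambiguity is at most $\varepsilon rk$ bits. But the good set has measure $2^{-\varepsilon rk}$, so a uniformly random input is bad with overwhelming probability, the targets' outputs on bad inputs are unconstrained, and the entropy of target-side corrections can be close to $rk$; there is no partition of inputs into small confusable classes. The paper instead invokes the $\cF$-correction game (Lemma~\ref{lem:CorrectionGame}): the hub $u$ sends each $\beta_i$ to \emph{both} the source side and $t_i$, the source side XORs its input with $\gamma_i$ so that the shifted input lies in $\cF$ and the original scheme is run on the shifted input, and $t_i$ undoes the shift. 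This forces the new-sources trick ($s'_i \to s_i$, $s'_i \to u$, $u \to s_i$, $u \to t_i$) to keep the graph acyclic, and the achievable bound is $\sum_i \E|\beta_i| \le \tfrac{1}{4}rk$ for small $\varepsilon$, a small constant fraction rather than your $O(\varepsilon rk)$; the later constants are tuned to exactly this fraction.

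Second, the shortcut problem, which you correctly identify as the main obstacle but whose proposed resolution does not work. The edges feeding the hub must carry the full $r$-bit source messages, so their total capacity is about $rk$. If you stretch them into chains of length $\Theta(d)$ to preserve distances, they contribute $\Theta(rkd)$ to the right-hand side of your total-work inequality, the same order as the desired left-hand side $r\bigl(\delta-\tfrac{5}{6}\bigr)kd$, and no bound on $|E(G)|$ follows; if you keep them short, the hub places every source within distance $2$ of the targets (and of other sources), so $\mathrm{dist}_{\un(R^+)}(s_i,t_i)$ collapses for essentially all pairs and a Markov argument over ``pairs whose distance is shortened'' has nothing left to count. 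The paper's resolution is different: it makes no attempt to preserve distances in the augmented network (it explicitly notes every $s_i,t_i$ is at distance $2$ there). Instead it bounds the total flow routed through $u$ by half of $u$'s incident capacity, i.e.\ by $\tfrac{3}{4}kr$, applies Markov to get at least $k/6$ commodities with at least $r/10$ units avoiding $u$, intersects with the $\delta k$ long pairs to obtain $\bigl(\delta-\tfrac{5}{6}\bigr)k$ commodities, and charges only this hub-avoiding flow---which is confined to $\un(G)$ and hence travels distance at least $d$---against the capacity $r\,|E(G)|$ of the original edges. That accounting, not a distance-preservation argument, is what produces $\delta' = (\delta-5/6)/10$.
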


Now we are ready to prove a conditional lower bound for the permutation inversion.
For the proof we use the following fact which follows from well-known Stirling's formula:

\begin{fact}
\label{fact:Perm}
 The number of permutations $[n] \to [n]$ is at least $2^{n\log n - 2n}$.
\end{fact}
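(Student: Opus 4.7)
The plan is to invoke Stirling's formula in its crudest form; no sharp approximation is needed. The inequality $n! \geq (n/e)^n$ follows immediately from the power series $e^n = \sum_{k \geq 0} n^k/k! \geq n^n/n!$, so no integral estimate or analytic machinery is required. Alternatively, one can compare $\ln(n!) = \sum_{i=1}^n \ln i$ to $\int_1^n \ln x\, dx = n\ln n - n + 1$ to get the same thing (in fact slightly stronger by a factor of $e$), but the series argument is the cleanest.

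Taking base-$2$ logarithms (the convention used throughout the paper) yields $\log(n!) \geq n\log n - n\log e$. Since $\log e = 1/\ln 2 < 1.443 < 2$, this gives $\log(n!) \geq n\log n - 2n$, and exponentiating produces the claimed bound $n! \geq 2^{n\log n - 2n}$. There is no genuine obstacle: the statement is deliberately phrased with the looser constant $2$ in place of $\log e$ precisely so the user does not have to carry the awkward $\log e$ factor through later calculations, and the slack between $\log e \approx 1.443$ and $2$ leaves comfortable room for all $n \geq 1$.
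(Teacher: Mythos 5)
Your proof is correct and matches the paper's approach: the paper simply invokes Stirling's formula, and your elementary derivation $n! \geq (n/e)^n$ (via $e^n \geq n^n/n!$) followed by $\log e < 2$ is exactly the weak form of Stirling that the claimed bound $2^{n\log n - 2n}$ requires.
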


\NccPerm*

\begin{proof}
 Let $\DS = \DS_n$ be the optimal data structure for the inversion of permutation on $[n]$ using space $\varepsilon n \log n$.
 We set $\Tm = \Tm_\PI(\varepsilon n \log n)$. 
 We will construct a directed acyclic uniform network $R = \bigl(G,r,(s_i,t_i)_{i \in [n]}\bigr)$ where $r = \log n$.
 Let $\varepsilon' = 2\cdot\varepsilon + \frac{2}{q} + \frac{2}{\log n}$ for sufficiently large $q$ so that we could apply Lemma~\ref{lem:SufficientCoding}.
 The network $R$ will admit an $(\varepsilon', r)$-encoding scheme $E$.
 The number of edges of $G$ will be at most $2\Tm n$ and the network $R$ will be $\bigl(\frac{9}{10},d\bigr)$-long for $d = \frac{1}{2} \log_{qt} n$.
 Thus, by Lemma~\ref{lem:SufficientCoding} we get that
 \[
  2\Tm = \frac{2 \Tm n}{n} \geq \Omega\left( \log_{qt} n \right),
 \]
 from which we can conclude that $\Tm \geq \Omega\bigl( \log n / \log \log n\bigr)$.
 Thus, it remains to construct the network $R$ and the scheme $E$.
 
 First, we construct a graph $G'$ which will yield the graph $G$ by deleting some edges.
 The graph $G'$ has three layers of $n$ vertices: a source layer $A$ of $n$ sources $s_0,\dots,s_{n-1}$, a middle layer $M$ of $n$ vertices $v_0,\dots,v_{n-1}$ and a target layer $B$ of $n$ vertices $u_0,\dots,u_{n-1}$.
 The targets $t_{0},\dots,t_{n-1}$ of $R$ will be assigned to the vertices $u_0,\dots,u_{n-1}$ later.
 
 We add edges according to the data structure $\DS$:
 Let $Q_j \subseteq [n]$ be a set of oracle queries, which $\DS$ makes during the computation of $f^{-1}(j)$, i.e., for each $i \in Q_j$, it queries the oracle of $f$ for $f(i)$.
 As $\DS$ is non-adaptive, the sets $Q_j$ are well-defined.
 For each $j \in [n]$ and $i \in Q_j$  we add edges $(s_i, v_j)$ and $(v_i, u_j)$. 
 We set a capacity of all edges to $r = \log n$.
 This finishes the construction of $G'$, see Fig.~\ref{fig:InvertingNetwork} for illustration of the graph $G'$.
 
  \begin{figure}
  \centering
  \includegraphics{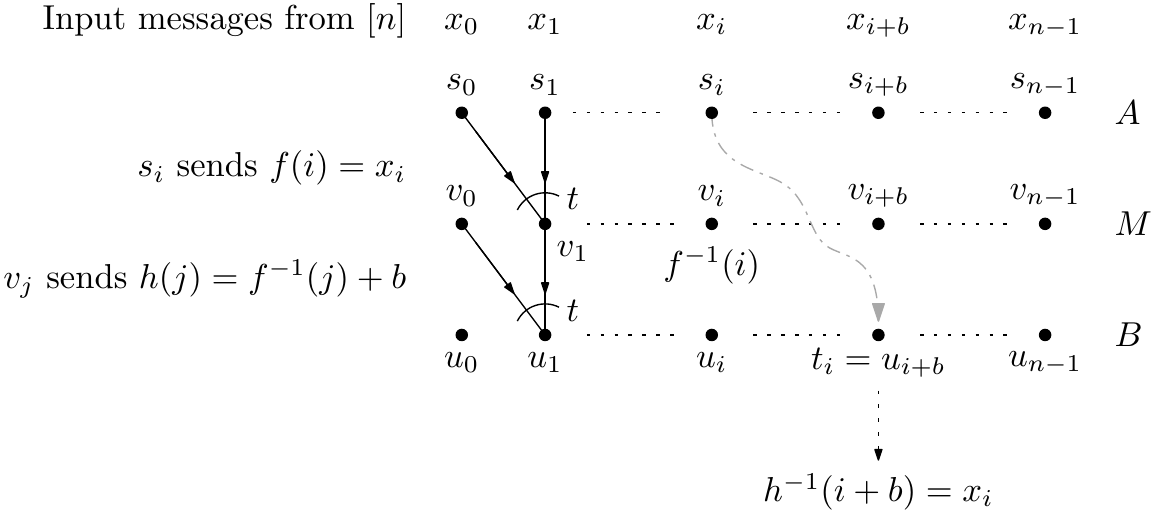}
  \caption{A sketch of the graph $G'$ and encoding scheme $E$.}
  \label{fig:InvertingNetwork}
 \end{figure}
 
 The graph $G'$ has exactly $2\Tm n$ edges.
 Moreover, the vertices of the middle and the target layer have in-degree at most $\Tm$ as the incoming edges correspond to the oracle queries made by $\DS$.
 However, some vertices of the source and the middle layer might have large outdegree, which is a problem that might prevent the network $R$ to be $\bigl(\frac{9}{10}, d\bigr)$-long.
 For example, the data structure $\DS$ could always query $f(0)$.
 Then, there would be edges $(s_0,v_j)$ and $(v_0,u_j)$ for all $j \in [n]$, hence all vertices would be at distance at most 4 in $\un(G')$.
 So we need to remove edges adjacent to high-degree vertices. Let $W \subseteq V(G')$ be the set of vertices of out-degree larger than $q\Tm$. 
 We remove all edges incident to $W$ from $G'$ to obtain the graph $G$. (For simplicity, we keep the degree 0 vertices in $G$).
 Thus, the maximum degree of $G$ is at most $q\Tm$.
 Since the graph $G'$ has $2\Tm n$ edges, it holds that $|W| \leq \frac{2}{q}\cdot n$.
 
 Now, we assign the targets of $R$ in such a way that $R$ is $\bigl(\frac{9}{10}, d\bigr)$-long.
 Let $C_v$ be the set of vertices of $G$ which have distance at most $d$ from $v$ in $\un(G)$.
 Since the maximum degree of $G$ is at most $q\Tm$ and $d = \frac{1}{2}\log_{q\Tm} n$, for each $v \in V(G)$,  $|C_v| \leq 2\sqrt{n}$.
 In particular, for every source $s_i$ it holds that $|C_{s_i} \cap B| \leq 2\sqrt{n}$, i.e., there are at most $2\sqrt{n}$ vertices in the target layer $B$ at distance smaller than $d$ from $v$.
 It follows from an averaging argument that there is an integer $b$ such that there are at least $n - 2\sqrt{n}$ sources $s_i$ with  distance at least $d$ from $u_{i + b}$ in $\un(G)$. (Here the addition $i + b$ is modulo $n$.)
 We fix one such $b$ and set $t_i = u_{i + b}$.
 For $n$ large enough, it holds that $n - 2\sqrt{n} \geq \frac{9}{10}\cdot n$.
 Thus, the network $R$ is $\bigl(\frac{9}{10}, d\bigr)$-long.
 
 It remains to construct the $(\varepsilon',r)$-encoding scheme $E$ for $R$ (see Fig.~\ref{fig:InvertingNetwork} for a sketch of the encoding $E$).
 Each source $s_i$ receives a number $x_i \in [n]$ as an input message.
 We interpret the string of the input messages $x_0, \dots, x_{n-1}$ as a function.
 We define the function $f: [n] \to [n]$ as $f(i) = x_i$.
 We will consider only those inputs $x_0, \dots, x_{n-1}$ which are pairwise distinct so that $f$ is a permutation.
 
 At a vertex $v_j$ of the middle layer $M$ we want to compute $f^{-1}(j)$ using the data structure $\DS$.
 To compute $f^{-1}(j)$ we need the appropriate advice string $\Advf{f}$ and answers to the oracle queries $Q_j$. 
 We fix an advice string $\Advf{f}$ to some particular value which will be determined later, and we focus only on inputs $x_0,\dots,x_{n-1}$ which  have the same advice string $\Advf{f}$.
 In $G'$ the vertex $v_j$ is connected exactly to the sources $s_i$ for $i \in Q_j$, but some of those connections might be missing in $G$.
 Thus for each $i$ such that $s_i \in W$, $x_i$ will be fixed to some particular value $c_i$ which will also be determined later.
 Each source $s_i$ sends the input $x_i$ along all outgoing edges incident to $s_i$.
 Thus, at a vertex $v_j$ we know the answers to all $f$-oracle queries in $Q_j$.
 Recall that $f(i) = x_i$ and each $x_i$ for $i \in Q_j$ was either fixed to $c_i$ or sent along the incoming edge $(s_i,v_j) \in E(G)$.
 We also know the advice string $\Advf{f}$ as it was fixed.
 Therefore, we can compute $f^{-1}(j)$ at every vertex $v_j$.
 Note that $f^{-1}(j)$ is the index of the source which received $j$ as an input message, i.e., if $f^{-1}(j) = i$, then $x_i = j$.
 
 Now, we define another permutation $h: [n] \to [n]$ as $h(j) = f^{-1}(j) + b$ where the addition is modulo $n$.
 Since $b$ is fixed, we can compute $h(j)$ at each vertex $v_j$.
 The goal is to compute $h^{-1}(\ell)$ at each vertex $u_\ell$ of the target layer.
 First, we argue that $h^{-1}(i + b) = x_{i}$.
 The permutation $f^{-1}$ maps an input message $x_i$ to the index $i$.
 The permutation $h$ maps an input message $x_i$ to the index $i + b$.
 Thus, the inverse permutation $h^{-1}$ maps the index $i + b$ to the input message $x_i$.
 If we are able to reconstruct $h^{-1}(i + b)$ at the target $t_i = u_{i + b}$, then in fact we are able to reconstruct $x_i$, the input message received by the source $s_i$.
 
 To reconstruct $h^{-1}(\ell)$ at the vertex $u_\ell$ we use the same strategy as for reconstructing $f^{-1}(j)$ at vertices $v_j$.
 We use again $\DS$, but this time for the function $h$.
 Again, we fix the advice string $\Advf{h}$ of $\DS$, and we fix $h(j)$ to some $d_j$ for each vertex $v_j \in W$.
 Each vertex $v_j$ sends the value $h(j)$ along all edges outgoing from $v_j$.
 To compute $h^{-1}(\ell)$ we need values $h(j)$ for all $j \in Q_\ell$, which are known to the vertex $u_\ell$. Again, they are either sent along the incoming edges or are fixed to $d_j$.
 Since the value of the advice string $\Advf{h}$ is fixed, we can compute the value $h^{-1}(\ell) = x_{\ell - b}$ at the vertex $u_\ell$.
 
 The network $R$ is correct on all inputs $x_0,\dots,x_{n-1}$ which encode a permutation and which are consistent with the fixed advice strings and the fixed values to the degree zero vertices.
 Now, we argue that we can fix all the values so that there will be many inputs consistent with them.
 By Fact~\ref{fact:Perm}, there are at least $2^{(n\log n) - 2n}$ inputs $x_0,\dots,x_{n-1}$ which encode a permutation.
 In order to make $R$ work, we fixed the following values:
 \begin{enumerate}
  \item Advice strings $\Advf{f}$ and $\Advf{h}$, in total $2\varepsilon \cdot n \log n$ bits.
  \item An input message $c_i$ for each source $s_i$ in $W$ and a value $d_j$ for each vertex $v_j$ in $W$.
  Since $|W| \leq \frac{2}{q}\cdot n$ and $c_i, d_j \in [n]$, we fix $\frac{2}{q} \cdot n \log n$ bits in total.
 \end{enumerate}
 Overall, we fix at most $(2\varepsilon + \frac{2}{q})\cdot n \log n$ bits.
 Thus, the fixed values divide the input strings into at most $2^{(2\varepsilon + \frac{2}{q})\cdot  n \log n}$ buckets. In each bucket all the input strings are consistent with the fixed values.
 We conclude that there is a choice of values to fix so that its corresponding bucket contains at least $2^{(1 - 2\varepsilon - \frac{2}{q} - \frac{2}{\log n} )\cdot n \log n} = 2^{(1 - \varepsilon')\cdot n \log n}$ input strings which encode a permutation. We pick that bucket and fix the corresponding values.
 Thus, the scheme $E$ is $(\varepsilon',r)$-encoding scheme, which concludes the proof.
 \end{proof}

\section{NCC Implies a Weak Lower Bound for the Polynomial Evaluation and Interpolation}
\label{sec:NCC_polynomial}
In this section, we prove Theorem~\ref{thm:NCC_PE}.
The proof follows the blueprint of the proof of Theorem~\ref{thm:NCC_PI}.
The construction of a network $R$ from a data structure is basically the same.
Thus, we mainly describe only an $(\varepsilon',r)$-encoding scheme for $R$.

\NccPoly*

\begin{proof}
	Let $\DS = \DS_n$ be the optimal non-adaptive systematic data structure for the evaluation of polynomials of degree up to $n - 1$ over $\F$ and using space $\Sp = \varepsilon n \log |\F|$.
	We set $t = t^\F_\PE(\Sp), r = \log |\F|$ and $\varepsilon' = 2\varepsilon + \frac{2}{q}$ for sufficently large $q$.
	Again, we will construct a network $R = \bigl(G,r,(s_i,t_i)_{i \in [n]}\bigr)$ from $\DS$.
	To construct an $(\varepsilon', r)$-encoding scheme for $R$, we use entries of FFFT, i.e., we will evaluate polynomials of degree at most $n-1$ in powers of a primitive $n$-th root of the unity.
	Thus, we fix a primitive $n$-th root of unity $\sigma \in \F$, which we know exists, as discussed in Section~\ref{sec:Polynomials}.
	
	We create a network $R$ from $\DS$ in the same way as we created in the proof of Theorem~\ref{thm:NCC_PI}.
	By Lemma~\ref{lem:SufficientCoding} we are able to conclude that $t \geq \Omega\bigl(\log n/\log \log n\bigr)$.
	First, we create a graph $G'$ of three layers $A = \{s_0,\dots,s_{n-1}\}, M = \{v_0,\dots,v_{n-1}\}$ and $B = \{u_0,\dots,u_{n-1}\}$ and we add $2tn$ edges to $G'$ according to the queries of $\DS$ -- on the vertex $v_j$ we will evaluate a polynomial in a point $\sigma^j$ and on the vertex $u_j$ we will evaluate a polynomial in a point $\sigma^{-j}$.
	Then, we create a graph $G$ from $G'$ by removing edges incident to vertices in a set $W$, which contains vertices of degree higher than $qt$.
	Finally, we set a shift $b \in [n]$ and set $t_i = u_{i + b}$ in such a way that the network $R$ is $\bigl(\frac{9}{10}, d\bigr)$-long for $d = \frac{1}{2}\log_{qt} n$.
	
	Now, we desribe an $(\varepsilon', r)$-encoding scheme $E$ for $R$ using $\DS$.
	Each source $s_i$ receives an input message $\alpha_i \in \F$ which we interpret as coefficients of a polynomial $p \in \F[x]$ (that is $p(x) = \sum_{i \in [n]} \alpha_i x^i$).
	Each source $s_i$ sends its input message $\alpha_i$ along all outgoing edges from $s_i$.
	Each vertex $v_j$ computes $p(\sigma^j)$ using $\DS$.
	Again, we fix the advice string $\Advf{p}$ and the input messages $\alpha_i$ for the sources $s_i$ in $W$.
	Each vertex $v_j$ computes a value $h(j) = p(\sigma^j)\cdot \sigma^{jb}$ and sends it along all outgoing edges from $v_j$.
	We define a new polynomial $p'(x) = \sum_{j \in [n]}h(j)x^j$.
	We fix the advice string $\Advf{{p'}}$ and the values $h(j)$ for each vertex $v_j \in W$.
	Thus, each vertex $u_\ell$ can compute a value $p'(\sigma^{-\ell})$.
	We claim that $\frac{p'(\sigma^{-\ell})}{n} = \alpha_{\ell - b}$.
    \begin{align*}
        \frac{p'(\sigma^{-\ell})}{n}
        &= \frac{1}{n} \sum_{j \in [n]} h(j) \sigma^{-\ell j} = \frac{1}{n}  \sum_{j \in [n]} p(\sigma^j) \sigma^{jb} \sigma^{-\ell j} 
        = \frac{1}{n} \sum_{j \in [n]} \left( \sum_{i \in [n]} \alpha_{i} \sigma^{ji} \right) \sigma^{jb} \sigma^{-\ell j} \\
        &= \frac{1}{n} \sum_{i \in [n]} \alpha_{i} \left( \sum_{j \in [n]} \sigma^{ji} \sigma^{jb} \sigma^{-\ell j}  \right) 
        = \frac{1}{n} \sum_{i \in [n]} \alpha_{i} \left( \sum_{j \in [n]} \sigma^{j(i + b - \ell)} \right) \\
        &= \alpha_{(\ell - b \mod n)}
    \end{align*}
    The last equality is by noting that $\sum_{j \in [n]} \sigma^{j(i + b - \ell)}=n$ for $i=\ell-b$ and $0$ otherwise.
	Therefore, at each target $t_{i} = u_{i + b}$ we can reconstruct the input message $\alpha_i$.
	See Fig.~\ref{fig:PolyScheme} for a sketch of the scheme $E$.
	
	\begin{figure}
	 \centering
	 \includegraphics{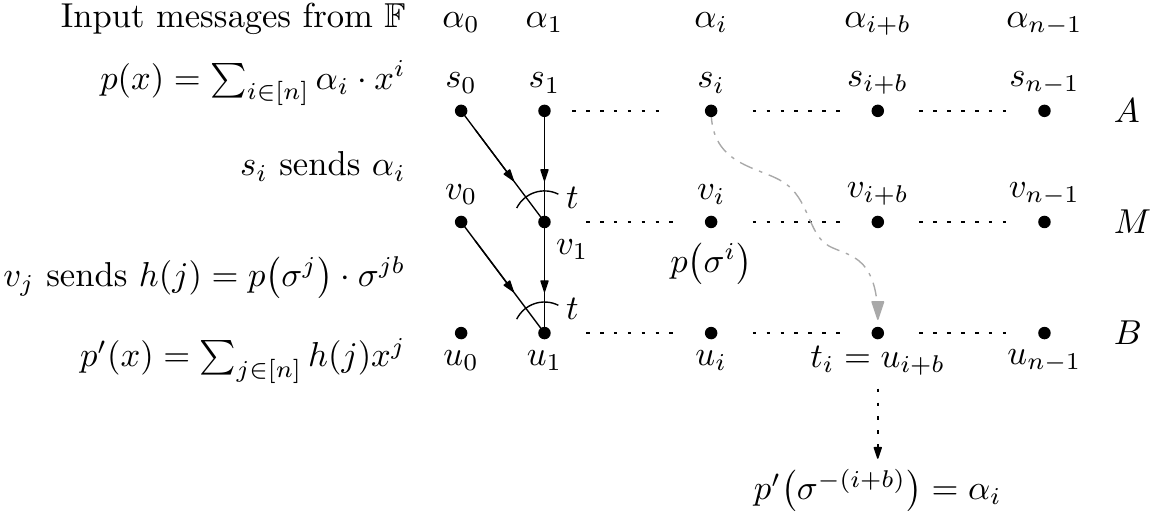}
	 \caption{Sketch of the encoding scheme $E$.}
	 \label{fig:PolyScheme}
	\end{figure}

	Again, we can fix values of advice strings $\Advf{p}$ and $\Advf{{p'}}$ (at most $2\varepsilon\cdot n\log|\F|$ fixed bits), input messages $\alpha_i$ for each $s_i \in W$ and value of $h(j)$ for each $v_j \in W$ (at most $\frac{2}{q}\cdot n \log |\F|$ fixed bits) in such a way there is a set $\cF$ of inputs $(\alpha_0,\dots,\alpha_{n-1})$ consistent with such fixing and $|\cF| \geq 2^{(1 - 2\varepsilon - \frac{2}{q})n \log |\F|}$.
	Therefore, the scheme $E$ is $(\varepsilon',r)$-encoding scheme.
	This finishes the proof that $\Tm^\F_\PE(\Sp) \geq \Omega\bigl(\log n / \log \log n\bigr)$.
	
	Essentially, the same proof can be used to prove the lower bound for $\Tm^\F_\PInt(\Sp)$.
	Note that, the data structure $\DS$ is used only for evaluating some polynomials in powers of the primitive root $\sigma$, i.e., computing entries of $\text{FFFT}_{n,\sigma}(\alpha_0,\dots,\alpha_{n-1})$.
	However as discussed in Section~\ref{sec:Polynomials}, it holds that $\text{FFFT}_{n,\sigma} = n \cdot \text{FFFT}^{-1}_{n,\sigma^{-1}}$.
	Moreover, entries of $\text{FFFT}^{-1}_{n,\sigma^{-1}}(\beta_0,\dots,\beta_{n-1})$ can be computed by a data structure for the polynomial interpolation.
	Thus, we may replace both uses of the data structure for the polynomial evaluation with a data structure for the polynomial interpolation.
	Therefore, we can use a data structure for the interpolation as $\DS$ and with slight changes of $R$ and $E$, we would get again an $(\varepsilon',r)$-encoding scheme.
\end{proof}

\section{Strong Lower Bounds for Data Structures and Lower Bounds for Boolean Circuits}
\label{sec:valiant}

In this section, we study a connection between non-adaptive data structures and boolean circuits.
We are interested in circuits with binary AND and OR gates, and unary NOT gates. (See e.g.~\cite{jukna2012boolean} for background on circuits).

Corrigan-Gibbs and Kogan~\cite{Gibbs19} describe a connection between lower bounds for non-adaptive data structures and lower bounds for boolean circuits for a special case when the data structure computes function inversion.
They show that we would get a circuit lower bound if any non-adaptive data structure using $O(n^{\varepsilon})$ queries must use at least $\omega\bigl(n \log n / \log \log n\bigr)$ bits of advice (for some fixed constant $\varepsilon > 0$).
To be able to formally restate their theorem we present some of their definitions.
We define a \emph{boolean operator} to be a family of functions $\left( F_n \right)_{n \in \mathbb{N}}$ for $F_n \colon \left\{ 0,1 \right\}^{n} \rightarrow \left\{ 0,1 \right\}^{n}$ represented by boolean circuits with $n$ input and $n$ output bits and constant fan-in gates.
A boolean operator is said to be an \emph{explicit operator} if the decision problem whether the $j$-th output bit of $F_n$ is equal to one is in the complexity class NP.

\begin{theorem}[Corrigan-Gibbs and Kogan~\cite{Gibbs19}, Theorem~3 (verbatim)]
	If every explicit operator has fan-in-two boolean circuits of size $O(n)$ and depth $O(\log n)$ then, for every $\varepsilon > 0$, there exists a family of strongly non-adaptive black-box algorithms that inverts all functions $f \colon [N] \rightarrow [N]$ using $O\bigl(N \log N / \log \log N\bigr)$ bits of advice and $O(N^\varepsilon)$ online queries.
	\label{thm:a:corrigangibbs_kogan_thm3_verbatim}
\end{theorem}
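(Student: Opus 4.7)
The plan is to encode batch inversion as a single explicit operator, invoke the hypothesis to obtain a linear-size logarithmic-depth circuit for it, and then apply Valiant's common-information (depth-reduction) lemma to read off a strongly non-adaptive data structure whose advice is the joint value of the wires on a small separator.

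First, I would construct the operator. For each $N$, set $n := N \log N$ and define $F_n \colon \{0,1\}^n \to \{0,1\}^n$ by parsing the input as the description of a function $f \colon [N] \to [N]$ (each value taking $\log N$ bits) and returning the concatenation $\bigl(f^{-1}(0), f^{-1}(1), \ldots, f^{-1}(N-1)\bigr)$, where $f^{-1}(y) := \min\{x \in [N] : f(x) = y\}$ and $\min \emptyset := 0$, matching the convention from the excerpt. This operator is explicit: to decide whether the $j$-th output bit of $F_n$ is $1$, a verifier parses $j$ as a pair $(y,b)$, nondeterministically guesses $x \in [N]$, and checks in deterministic polynomial time that $f(x) = y$, that bit $b$ of $x$ equals $1$, and that $f(x') \neq y$ for every $x' < x$. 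The minimality check runs in time $O(N)$, which is polynomial in $n$, so the associated language is in NP.

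By hypothesis, $F_n$ is then computed by a fan-in-two circuit $C_n$ of size $O(n)$ and depth $O(\log n)$. I would then apply Valiant's depth-reduction lemma in the following form: for every constant $\varepsilon > 0$ there is a set $S$ of $O(n / \log \log n)$ wires of $C_n$ such that, once the values on $S$ are fixed, every output of $C_n$ is computed by a residual subcircuit of depth $O(\log n / \log \log n)$, and therefore depends on at most $2^{O(\log n / \log \log n)} \leq n^\varepsilon$ input bits for sufficiently large $n$. Crucially, both $S$ and, for each output coordinate, the set of input positions it still depends on, are properties of the fixed circuit $C_n$ alone; they do not depend on $f$. The data structure now writes itself. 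During preprocessing, evaluate $C_n$ on (the description of) $f$ and store $\Advf{f}$ to be the $|S|$ wire values on $S$, occupying $O(N \log N / \log \log N)$ bits. On a query $y \in [N]$, the algorithm looks at the $\log N$ output bits of $C_n$ corresponding to coordinate $f^{-1}(y)$; each of these bits depends on at most $n^\varepsilon$ input positions together with $\Advf{f}$. The union of these input positions, combined over the $\log N$ bits, translates to at most $O(n^\varepsilon) = O(N^{\varepsilon + o(1)})$ oracle probes to $f$ (each probe returning $\log N$ description bits), which is below $N^{\varepsilon'}$ for any desired $\varepsilon' > \varepsilon$. The probe pattern is determined solely by $y$ and the fixed $C_n$, so the algorithm is strongly non-adaptive.

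The main obstacle will be the precise quantitative form of Valiant's lemma: the argument needs simultaneously a separator of size $O(n / \log \log n)$ and, post-removal, per-output input-dependence bounded by $n^\varepsilon$. This is obtained by iterating the classical Valiant depth-reduction on the DAG of $C_n$ (removing the heaviest wires to cut depth from $O(\log n)$ down to $O(\log n / \log \log n)$) and using fan-in two to translate residual depth into a pointwise input-count bound $2^{\mathrm{depth}}$. Tracking the constants carefully, choosing the reduction parameter small enough so that $n^{\varepsilon + o(1)}$ fits under the target exponent, and verifying that $S$ together with the output-to-input dependency table can indeed be stored in the claimed advice budget is where the technical work lies. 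Once that decomposition is in hand, the operator definition, its NP certificate, and the translation from bit-probes to $f$-probes go through without further effort.
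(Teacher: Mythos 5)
Your proposal is correct and follows essentially the same route as the paper (and Corrigan-Gibbs--Kogan): define the explicit batch-inversion operator on $n = N\log N$ bits, take the hypothesized $O(n)$-size, $O(\log n)$-depth circuit, apply Valiant's common-bits decomposition, store the $O(n/\log\log n)$ cut values as advice, and answer a query by probing the at most $n^{\varepsilon}$ input positions still connected to the $\log N$ output bits of block $y$. One small caveat: the form of Valiant's lemma you invoke (removing $O(n/\log\log n)$ wires to get residual depth $O(\log n/\log\log n)$) is stronger than the standard statement, but you only need the standard guarantee -- residual depth $\varepsilon'\log n$, hence $n^{\varepsilon'}$ input-dependence per output -- which is exactly Theorem~\ref{thm:a:valiant_ckt}, so the argument goes through unchanged.
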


To prove their theorem Corrigan-Gibbs and Kogan~\cite{Gibbs19} use the common bits model of boolean circuits described by Valiant~\cite{valiant1977graph,valiant1983exponential,valiant1992boolean}.
Valiant proves that for any circuit there is a small cut, called \emph{common bits}, such that each output bit is connected just to few input bits (formally stated in Theorem~\ref{thm:a:valiant_ckt}).
Corrigan-Gibbs and Kogan~\cite{Gibbs19} use the common bits of the given circuit to create a non-adaptive data structure by setting the advice string to the content of common bits and the queries are to those function values which are still connected to the particular output after removing the common bits.

Observe that it follows from the proof of Theorem~\ref{thm:a:corrigangibbs_kogan_thm3_verbatim} that the hard explicit operator is turning the function table into the table of its inverse function.
The theorem is therefore slightly stronger in the sense that if we have a data structure lower bound we also have a lower bound for a concrete boolean operator.
It is also not straightforward to state a connection between circuits computing FFFT and non-adaptive data structures computing polynomial evaluation (resp. polynomial interpolation) as a consequence of Theorem~\ref{thm:a:corrigangibbs_kogan_thm3_verbatim}.
Thus, we restate the Valiant's result to be able to state a more general theorem.

\begin{theorem}[Valiant~\cite{valiant1977graph,valiant1983exponential,valiant1992boolean}]
	For every constant $\varepsilon > 0$, for every family of constant fan-in boolean circuits $\left\{ C_n \right\}_{n \in \mathbb{N}}$, where $C_n \colon \left\{ 0,1 \right\}^n \rightarrow \left\{ 0,1 \right\}^{n}$ is of size $O(n)$ and depth $O(\log n)$, and for every $n \in \mathbb{N}$ it holds that the circuit $C_n$ contains a set of gates called \emph{common bits} of size $O\bigl(n / \log \log n\bigr)$ such that if we remove those gates then each output bit is connected to at most $O(n^\varepsilon)$ input bits.
	\label{thm:a:valiant_ckt}
\end{theorem}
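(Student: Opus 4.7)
My plan is to follow Valiant's classical depth-reduction lemma for log-depth circuits. View $C_n$ as a DAG on $s = O(n)$ gates of bounded fan-in with longest input-to-output path length $d = c \log n$ for a constant $c > 0$. Assign each gate $v$ its depth $d(v)$, the length of the longest directed path from an input to $v$, and note that $d(v)$ fits into $r = \lceil \log_2 d \rceil = O(\log\log n)$ bits; this scarcity is what enables pigeonhole to produce a sparse separator.

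The key combinatorial step is a scale-labeling of the wires. To each wire $(u, v)$ assign the label $\ell(u, v) \in \{0, 1, \ldots, r-1\}$ equal to the position of the highest bit in which $d(u)$ and $d(v)$ differ. Since $C_n$ has $O(n)$ wires in total and there are only $r = O(\log\log n)$ possible labels, averaging yields, for any target constant $k$, a choice of $k$ label values whose wires collectively constitute an $O(k/r)$ fraction of the total, i.e., at most $O(n/\log\log n)$ wires. I take $S$ to be the endpoints of those wires; then $|S| = O(n/\log\log n)$, and $S$ will serve as the set of common bits.

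To establish the residual fan-in bound, I would argue that by choosing the $k$ labels suitably, via a recursive averaging across bit positions in the spirit of Valiant's original argument, the depths visited along any residual path become confined to a structured set of size $O(d/2^k)$. With constant fan-in, walking back that many steps from any output reaches at most $2^{O(d/2^k)} = n^{O(c/2^k)}$ input gates, which is $O(n^\varepsilon)$ provided we take $k = \lceil \log_2(c/\varepsilon) \rceil = O(1)$. Removing the whole gate set $S$ (which also drops some unlabeled wires incident to $S$) can only further decrease reachability, so the bound survives.

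The main obstacle I expect is the third step: balancing the cost of averaging, which grows linearly with the number of removed labels, against the geometric decomposition of depths, which requires the chosen labels to be structured rather than arbitrary. Valiant's layered pigeonhole overcomes this tension, but verifying that the resulting separator simultaneously satisfies both the size bound $O(n/\log\log n)$ and the reachability bound $O(n^\varepsilon)$ is the technically delicate heart of the argument.
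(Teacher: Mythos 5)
Your outline follows the same route the paper implicitly relies on: the paper gives no proof of this theorem at all (it cites Valiant), and your labeling of each wire by the most significant bit in which the depths of its endpoints differ, followed by averaging over the $O(\log\log n)$ label classes and a fan-in-two back-reachability count, is exactly Valiant's depth-reduction argument. The problem is that you leave a genuine gap at what you yourself call the ``technically delicate heart'': you never establish why deleting the wires of the chosen label classes confines every residual input-to-output path to length $O(d/2^k)$, and the way you propose to obtain it --- ``recursive averaging across bit positions'' with label classes that must be ``structured rather than arbitrary'' --- misdescribes the mechanism and invents a tension that does not exist.

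The missing statement is a purely combinatorial property of your labeling. Along any directed path the depths strictly increase, and between any two edges carrying the same label $i$ the path must contain an edge with label strictly larger than $i$: if all intermediate labels were at most $i$, the bits above position $i$ would never change, yet bit $i$ would have to fall from $1$ back to $0$ while the depth increases, a contradiction. Hence any path whose edge labels avoid $k$ of the $r$ classes uses labels from an alphabet of size $r-k$ with the property that between two equal symbols a larger one occurs, so its length is at most $2^{r-k}-1 = O(d/2^{k})$ --- and this holds for \emph{every} choice of the $k$ deleted classes, not only a structured one. So there is nothing to balance: delete the $k$ least popular classes (at most $O(kn/\log\log n)=O(n/\log\log n)$ wires, and take as common bits the gates feeding them), and the depth reduction comes for free from the sequence property; your final count $2^{O(d/2^{k})} = n^{O(c/2^{k})} \le n^{\varepsilon}$ for the constant $k = \lceil \log_2(c/\varepsilon)\rceil$ then closes the proof exactly as you wrote it. With that lemma supplied, your sketch becomes a correct rendering of Valiant's argument.
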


Now we can state a general theorem translating lower bounds for non-adaptive data structures to circuit lower bounds.
This allows us to apply the theorem directly to many different problems.

\begin{theorem}
	For every $n \in \mathbb{N}$ let us define $b(n) = \lceil \log n \rceil$ and for every function $f_n \colon \left\{ 0,1 \right\}^{n \cdot b(n)} \rightarrow \left\{ 0,1 \right\}^{n \cdot b(n)}$ and for every $i \in [n]$ we define a function $f_{n,i}\colon \left\{ 0,1 \right\}^{n \cdot b(n)} \rightarrow \left\{ 0,1 \right\}^{b(n)}$ as follows: 
	$f_{n,i}(x) = f_n(x)_{i \cdot  b(n), i \cdot b(n) + 1, \ldots, (i+1) \cdot b(n) - 1}, $
	i.e., $f_{n,i}$ returns the $(i+1)$-st consecutive block of $b(n)$ bits of the output of $f_{n}$.

	If there is a size $O(n\log n)$ and depth $O(\log n)$ circuit family $\{C_n\}_{n \in \mathbb{N}}$,
	where $C_n$ evaluates a function $f_n \colon \left\{ 0,1 \right\}^{n \cdot b(n)}\rightarrow \left\{ 0,1 \right\}^{n \cdot b(n)}$,
	then for every constant $\varepsilon > 0$ there exists a family of non-adaptive data structures $\left\{ \DS_n \right\}_{n \in \mathbb{N}}$,
	where $\DS_n$ on input $x \in \left\{ 0,1 \right\}^{n \cdot b(n)}$  uses $O\bigl(n \log n/ \log \log n\bigr)$ bits of advice and on a query $j \in [n]$ answers $f_{n,j}(x)$ using $O(n^{\varepsilon})$ queries to the input. 
	\label{thm:a:general_thm_from_valiant}
\end{theorem}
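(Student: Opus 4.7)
The plan is to apply Valiant's common-bits theorem (Theorem~\ref{thm:a:valiant_ckt}) to the given circuit and let the values of the common bits on input $x$ serve as the advice string. Set $N = n\cdot b(n) = O(n\log n)$, so that the hypothesized circuit $C_n$ has size $O(N)$ and depth $O(\log N)$, placing it in the regime where Valiant's theorem applies. Fix any constant $\varepsilon' \in (0,\varepsilon)$ and invoke Valiant's theorem with parameter $\varepsilon'$: this produces a set $S$ of at most $O(N/\log\log N) = O(n\log n/\log\log n)$ gates of $C_n$ such that after removing the gates in $S$, each output gate of $C_n$ is connected to at most $O(N^{\varepsilon'})$ input bits. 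Importantly, both $S$ and the residual dependence structure are determined by $C_n$ alone and do not depend on $x$.

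I would then define the data structure $\DS_n$ as follows. The preprocessing algorithm simulates $C_n$ on $x$ and writes down, as the advice string $\Advf{x}$, the $|S|$ bits that the gates of $S$ compute on $x$; this uses $O(n\log n/\log\log n)$ bits, matching the target space. To answer a query $j \in [n]$, we must output $f_{n,j}(x)$, which is a specified block of $b(n)$ output bits of $C_n$. By the guarantee of Valiant's theorem, each of these $b(n)$ output bits, once the gates in $S$ are fixed to their precomputed values, is a function of at most $O(N^{\varepsilon'})$ input bits, and the set of relevant input positions is determined by $C_n$, $S$, and the output index only. The query algorithm therefore reads the union of these input positions across the $b(n)$ outputs in block $j$, together with the entire advice string, and evaluates the residual sub-circuit (which is now fully determined) to recover $f_{n,j}(x)$.

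It remains to check the query bound and non-adaptivity. The number of input bits read is at most
\[
b(n)\cdot O(N^{\varepsilon'}) \;=\; O(\log n)\cdot O\bigl(n^{\varepsilon'}(\log n)^{\varepsilon'}\bigr) \;=\; O(n^{\varepsilon}),
\]
where the last equality absorbs the polylogarithmic factors into $n^{\varepsilon-\varepsilon'}$ using $\varepsilon'<\varepsilon$. The set of queried input positions depends only on $j$ (through $C_n$ and $S$), so $\DS_n$ is non-adaptive as required. In a proof-planning sense there is no serious obstacle: Valiant's theorem supplies exactly the structural decomposition that turns a shallow linear-size circuit into a small-advice, few-probes non-adaptive data structure, and the only point requiring care is the bookkeeping that translates Valiant's $O(N^{\varepsilon'})$ guarantee, phrased in $N=n\log n$, into the required $O(n^{\varepsilon})$ query bound by first choosing $\varepsilon'<\varepsilon$ and then accounting for the $b(n)=O(\log n)$ outputs in a single block.
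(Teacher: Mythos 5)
Your proposal is correct and follows essentially the same route as the paper, which proves this theorem exactly as Corrigan-Gibbs and Kogan prove their Theorem~3: apply Valiant's common-bits decomposition (Theorem~\ref{thm:a:valiant_ckt}) to $C_n$ viewed as a circuit on $N = n\cdot b(n)$ bits, store the common-bit values as the advice, and answer a query by reading the $O(N^{\varepsilon'})$ input bits still connected to each of the $b(n)$ output bits in the queried block. Your extra bookkeeping with $\varepsilon' < \varepsilon$ to absorb the $\log$ factors is the right way to land on the stated $O(n^{\varepsilon})$ query bound, and non-adaptivity follows, as you note, because the probed positions depend only on the circuit and the query index.
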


The proof of Theorem~\ref{thm:a:general_thm_from_valiant} is the same as the proof of Theorem~3 of Corrigan-Gibbs and Kogan~\cite{Gibbs19} (restated here as Theorem~\ref{thm:a:corrigangibbs_kogan_thm3_verbatim}).
Note that the data structures are not uniform in the sense that the algorithms for producing the advice string and for answering queries may differ for different input sizes.
If we would like to get a uniform algorithm we would need the assumption that the explicit operator has linear size and logarithmic depth {\em uniform} circuits.

Let us state concrete instances of Theorem~\ref{thm:a:general_thm_from_valiant}.
First, we formally state the stronger version of Theorem~3 of Corrigan-Gibbs and Kogan~\cite{Gibbs19}, which follows from their proof.
\begin{corollary}
	If there is a circuit family $\left\{ C_n \right\}_{n \in \mathbb{N}}$, such that $C_n \colon \left\{ 0,1 \right\}^{n \lceil \log n \rceil} \rightarrow \left\{ 0,1 \right\}^{n \lceil \log n \rceil}$ is of size $O(n\log n)$ and depth $O(\log n)$ and inverts a function $f_n \colon [n] \rightarrow [n]$ (given on input as a function table) on all points (i.e., returns function table\footnote{When $f_n$ is not a permutation we allow a table of any function which has zero if there is no preimage and any preimage if there are more possibilities.} of $f_n^{-1}$), then for every constant $\varepsilon > 0$ there exists a family of non-adaptive data structures $\left\{ \DS_n \right\}_{n \in \mathbb{N}}$ such that $\DS_n$ on all input functions $f_n \colon [n] \rightarrow [n]$ uses $O\bigl(n \log n/ \log \log n\bigr)$ bits of advice and for any $x \in [n]$ it answers $f_n^{-1}(x)$ using $O(n^{\varepsilon})$ queries to the input. 
	\label{thm:a:corrigangibbs_kogan_restated}
\end{corollary}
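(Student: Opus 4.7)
My plan is to derive the corollary as a direct instantiation of Theorem~\ref{thm:a:general_thm_from_valiant}, by identifying function inversion as an explicit operator on $\{0,1\}^{n \lceil \log n \rceil}$ in the form required by that theorem. Specifically, I would set $b(n) = \lceil \log n \rceil$ and define $f_n \colon \{0,1\}^{n \cdot b(n)} \to \{0,1\}^{n \cdot b(n)}$ to be exactly the map that interprets its input as the table of a function $g \colon [n] \to [n]$ (block $i$ of length $b(n)$ encoding $g(i)$) and outputs the table of $g^{-1}$ (block $i$ encoding $g^{-1}(i)$, with the tie-breaking and empty-preimage conventions from the footnote). With this choice, the block-selector $f_{n,j}(x)$ of Theorem~\ref{thm:a:general_thm_from_valiant} returns precisely $f_n^{-1}(j)$ in the notation of the corollary, where now $f_n$ inside the corollary plays the role of the input $x$ encoding the function table.

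Next, I would invoke the hypothesis of the corollary: by assumption there is a circuit family $\{C_n\}$ of size $O(n \log n)$ and depth $O(\log n)$ computing exactly the operator $f_n$ just defined. This is precisely the hypothesis required to apply Theorem~\ref{thm:a:general_thm_from_valiant}. The conclusion of that theorem then yields, for every constant $\varepsilon > 0$, a family of non-adaptive data structures $\{\DS_n\}$ where $\DS_n$ uses $O(n \log n / \log \log n)$ bits of advice on the input $x \in \{0,1\}^{n \lceil \log n \rceil}$ (i.e., on the function table of any $f_n \colon [n] \to [n]$) and, on query $j \in [n]$, returns $f_{n,j}(x) = f_n^{-1}(j)$ using at most $O(n^\varepsilon)$ queries to the input bits. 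Translating back to the language of function inversion, this is exactly the data structure promised by the corollary.

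The only thing to justify beyond the black-box application is that a query to an individual input bit can be charged against at most one query to an input block of $b(n)$ bits (since the data structure in Section~2 is defined to read blocks of size $O(\log |\F|)$ or $O(\log n)$ when a single oracle call reveals $f_n(i)$), so the $O(n^\varepsilon)$ bit-probes of Theorem~\ref{thm:a:general_thm_from_valiant} become $O(n^\varepsilon)$ block queries in the sense of the function inversion problem; this only changes constants hidden in the $\varepsilon$. I do not anticipate a real obstacle here: the entire substance lives inside Theorem~\ref{thm:a:general_thm_from_valiant} (which in turn rests on Valiant's common-bits theorem, Theorem~\ref{thm:a:valiant_ckt}), and the corollary is essentially a ``concrete instance'' as advertised in the paragraph preceding it, making the function inversion operator the explicit operator whose hypothetical efficient circuit is exploited.
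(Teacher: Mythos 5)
Your proposal matches the paper's own route: the corollary is obtained exactly as a direct instantiation of Theorem~\ref{thm:a:general_thm_from_valiant} (whose proof is Valiant's common-bits argument as in Corrigan-Gibbs and Kogan), taking the explicit operator to be the map from a function table to the table of its inverse, with the block/bit-query accounting handled as you describe. No gaps; this is essentially the same argument the paper intends.
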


Theorem~\ref{thm:a:general_thm_from_valiant} is general enough to easily capture the problem of computing FFFT over a finite field.
Note that by the connection of FFFT to polynomial evaluation and interpolation the following corollary captures both problems.

\begin{corollary}
	Let $\mathcal{S} = \left\{ p^k \mid p \text{ is a prime}, k \in \mathbb{N}, k \neq 0 \right\}$ be the set of all sizes of finite fields.
	For each $n \in \mathcal{S}$, let $\F_n = GF(n)$ and $\sigma_n$ be a primitive $(n-1)$-th root of unity (thus a generator of the multiplicative group $\F_n^*$).

	If there is a circuit family computing $\text{FFFT}_{n-1, \sigma_n}$ (over $\F_n$) of size $O(n \log n)$ and depth $O(\log n)$ (where each input and output number is represented by $\log |\F_n|$ bits) then for every $\varepsilon > 0$ there is a family of non-adaptive data structures $\left\{\DS_n\right\}_{n \in \mathcal{S}}$ where $\DS_n$ uses advice of size $O\bigl(n \log n / \log \log n\bigr)$ and on a query $j \in [n-1]$ outputs the $j$-th output of $\text{FFFT}_{n-1, \sigma_n}$ using $O(n^\varepsilon)$ queries to the input.
	\label{cor:a:valiant_evaluate}
\end{corollary}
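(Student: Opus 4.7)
The plan is to obtain Corollary~\ref{cor:a:valiant_evaluate} as a direct instance of Theorem~\ref{thm:a:general_thm_from_valiant} by playing the role of the operator $f_n$ in that theorem with $\text{FFFT}_{n-1,\sigma_n}$. Fix $n \in \mathcal{S}$ and set $m = n-1$, the length of the transform. Each element of $\F_n$ is encoded by $\lceil \log n \rceil$ bits, which differs from $b(m) = \lceil \log(n-1) \rceil$ by at most one, so I can view the hypothesized size $O(n \log n)$, depth $O(\log n)$ circuit for $\text{FFFT}_{n-1,\sigma_n}$ as a circuit computing an operator $f_m\colon \{0,1\}^{m \cdot b(m)} \to \{0,1\}^{m \cdot b(m)}$ after padding or truncating the encoding of field elements by a constant number of bits; this changes the total input length by at most a constant factor, so the resulting circuit family still has size $O(m \log m)$ and depth $O(\log m)$.

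Under this identification, the $j$-th output block $f_{m,j}(\alpha_0, \ldots, \alpha_{m-1})$ of the operator $f_m$ is precisely the $j$-th FFFT entry $\beta_j = \sum_{i \in [m]} \alpha_i \sigma_n^{ij}$. Applying Theorem~\ref{thm:a:general_thm_from_valiant} to $f_m$ then yields, for every $\varepsilon > 0$, a family of non-adaptive data structures $\{\DS_m\}$ that, on input $(\alpha_0, \ldots, \alpha_{m-1})$, uses $O\bigl(m \log m / \log \log m\bigr) = O\bigl(n \log n / \log \log n\bigr)$ bits of advice and, on any query $j \in [m]$, returns $\beta_j$ using $O(m^\varepsilon) = O(n^\varepsilon)$ queries to the input, which is exactly the claim.

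There is no genuine obstacle in this proof; the only technicality is the bookkeeping around the $O(1)$-bit mismatch between the natural encoding of $\F_n$-elements in $\lceil \log n \rceil$ bits and the block size $b(m)$ required by Theorem~\ref{thm:a:general_thm_from_valiant}, which is absorbed by constants. The substance of the argument is the observation that FFFT (and hence, via $\text{FFFT}_{n,\sigma}^{-1} = \frac{1}{n}\text{FFFT}_{n,\sigma^{-1}}$, polynomial evaluation and interpolation in powers of a primitive root of unity) fits the explicit-operator framework of Theorem~\ref{thm:a:general_thm_from_valiant}, so that the Valiant-style decomposition of linear-size, logarithmic-depth circuits automatically yields the desired data structure.
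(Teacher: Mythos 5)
Your proposal matches the paper's own treatment: the paper gives no separate argument for this corollary and simply presents it as a direct instantiation of Theorem~\ref{thm:a:general_thm_from_valiant} with the operator $f$ taken to be $\text{FFFT}_{n-1,\sigma_n}$, which is exactly what you do, and the parameter bookkeeping (blocks of $\log|\F_n|$ versus $b(n-1)$ bits, $n-1$ versus $n$) is indeed absorbed into the constants since the Valiant-style decomposition underlying the theorem is insensitive to such $O(1)$-bit and constant-factor changes. One small caution: ``truncating'' field-element encodings to $\lceil \log(n-1)\rceil$ bits is not actually possible when $\lceil\log n\rceil=\lceil\log(n-1)\rceil+1$, so you should rely on the padding/constant-absorption side of your remark rather than truncation.
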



To put the corollary in counter-positive way: if for some $\varepsilon>0$, there are no non-adaptive data structures for polynomial interpolation, polynomial evaluation or FFFT with advise of size $o\bigl(n \log n / \log \log \log n\bigr)$ that use $O(n^\varepsilon)$ queries to the input then there are no linear-size circuits of logarithmic depth for FFFT.




In Theorem~\ref{thm:NCC_PI}, resp. Theorem~\ref{thm:NCC_PE}, we prove a conditional lower bound for permutation inversion, resp. polynomial evaluation and polynomial interpolation, of the form, that a non-adaptive data structure using $\varepsilon n \log n$ bits must do at least $\Omega\bigl(\log n/\log \log n\bigr)$ queries.
It is not clear if assuming NCC we can get a sufficiently strong lower bound which would rule out non-adaptive data structures with sublinear advice string using $O(n^{\varepsilon})$ oracle queries.

\begin{corollary}
	We say that a circuit $C_n \colon \left\{ 0,1 \right\}^{n \lceil \log n \rceil} \rightarrow \left\{ 0,1 \right\}^{n \lceil \log n \rceil}$ sorts its input if on an input viewed as $n$ binary strings $x_1, x_2, \ldots, x_n \in \left\{ 0,1 \right\}^{\lceil \log n \rceil}$ outputs the strings sorted lexicographically.

	If there is a circuit family $\left( C_n \right)_{n \in \mathbb{N}}$, where $C_n \colon \left\{ 0,1 \right\}^{n \lceil \log n \rceil} \rightarrow \left\{ 0,1 \right\}^{n \lceil \log n \rceil}$ sorts its inputs, and each circuit $C_n$ is of size $O(n \log n)$ and depth $O(\log n)$ then for every $\varepsilon > 0$, for every permutation $f:[n] \rightarrow [n]$ there is a non-adaptive data structure for inverting $f$ that uses advice of size $O\bigl(n \log n / \log \log n\bigr)$ and $O(n^\varepsilon)$ queries.
	\label{cor:a:valiant_sort}
\end{corollary}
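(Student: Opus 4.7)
The plan is to apply Theorem~\ref{thm:a:general_thm_from_valiant} with $f_n$ taken to be the permutation-inversion function: $f_n$ reads the $n\lceil\log n\rceil$-bit function table of a permutation $f\colon[n]\to[n]$ and writes out the function table of $f^{-1}$. With this choice $f_{n,j}(x)=f^{-1}(j)$, so the theorem's conclusion for $f_n$ is exactly the inversion data structure with advice $O(n\log n/\log\log n)$ and $O(n^\varepsilon)$ queries that the corollary claims. The task therefore reduces to constructing, from the hypothesized sorting circuits, an $O(n\log n)$-size and $O(\log n)$-depth circuit family that computes $f_n$.

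To get such an inversion circuit I would use the textbook reduction of permutation inversion to sorting. From the input function table I form, in $O(n\log n)$ gates of constant depth, the $n$ augmented keys $c_i = f(i)\cdot n + i$, each of width $2\lceil\log n\rceil$. Because $f$ is a permutation the $c_i$'s are pairwise distinct, and the $j$-th smallest of them is $j\cdot n + f^{-1}(j)$, so the low $\lceil\log n\rceil$ bits of the sorted output blocks directly spell out the function table of $f^{-1}$. Both the formation of the $c_i$'s from the input and the final slicing-off of the low bits are $O(n\log n)$-gate, constant-depth layers, leaving only the middle sort to implement.

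The main obstacle, as I see it, is realizing the sort of $n$ numbers of $2\lceil\log n\rceil$ bits inside $O(n\log n)$ gates and $O(\log n)$ depth using only the assumed sorters, which sort $n$ numbers of $\lceil\log n\rceil$ bits of the same asymptotic complexity; a naive use of $C_{n^2}$ would cost $\Theta(n^2\log n)$ gates, which is too large. My plan is to build this wider sort by a constant-round radix sort that invokes $C_{m}$ for $m=O(n)$ on the $\lceil\log n\rceil$-bit halves of the $c_i$'s and uses $O(n\log n)$ gates of auxiliary routing of logarithmic depth to transport the companion half of each item alongside the sorted half. Distinctness of the $c_i$'s removes any need for tie-breaking, so each round fits inside the width budget of $C_{2n}$, and the composed circuit is linear-size and log-depth. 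Once this inversion circuit is in hand, the corollary follows immediately from Theorem~\ref{thm:a:general_thm_from_valiant}.
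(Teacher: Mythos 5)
Your first two steps are fine: reducing the corollary to exhibiting an $O(n\log n)$-size, $O(\log n)$-depth circuit that maps the table of $f$ to the table of $f^{-1}$ and then invoking Theorem~\ref{thm:a:general_thm_from_valiant} is exactly the intended route, and the augmented-key observation (the $j$-th smallest of the keys $f(i)\cdot n+i$ is $j\cdot n+f^{-1}(j)$) is the standard reduction of inversion to sorting of $2\lceil\log n\rceil$-bit keys. You also correctly identify the real obstacle: the hypothesis only provides sorters for $n$ strings of $\lceil\log n\rceil$ bits, and instantiating the family at size $n^2$ blows the size budget.

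The gap is in your resolution of that obstacle. The hypothesized circuits are \emph{key-only} sorters: they output the sorted keys and nothing else --- no rank information, no permutation, no channel for satellite data. Your ``constant-round radix sort'' needs two things that such a black box does not give you: (i) \emph{stability} in the second round (stability is precisely provenance information, which a key-only sorter destroys; distinctness of the full keys $c_i$ does not help, because within a round you sort only one half, where there can be ties and where, even without ties, you still do not learn which input position each output came from), and (ii) the ``auxiliary routing'' that transports the companion half of each item alongside its sorted half. That routing is the application of a data-dependent permutation (the rank permutation of the sorted halves) to $n$ blocks of $\lceil\log n\rceil$ bits, and you give no construction of it within $O(n\log n)$ gates and $O(\log n)$ depth. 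This is not a standard primitive: a crossbar costs $\Theta(n^2\log n)$, and a Bene\v{s}-type network has the right size and depth only if someone hands you the switch settings, whose computation is itself a sorting-like task. Worse, such a router would by itself already solve your problem --- routing the payload $i$ to destination $f(i)$ outputs the table of $f^{-1}$ directly --- so this step presupposes a linear-size log-depth circuit for essentially the very function you are trying to build, i.e., the argument is circular at its crucial point. To complete the proof you would need either a genuinely different way to simulate wide-key (or payload-carrying) sorting by the narrow key-only sorters with only constant overhead, or a different reduction from permutation inversion to key-only sorting; as it stands, the corollary does not follow from your proposal.
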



The works of Farhadi~et~al.~\cite{Farhadi19} and Asharov~et~al.~\cite{asharov2021sorting} connect the NCC conjecture directly to lower bounds for sorting.
Their work studies sorting $n$ numbers of $k+w$ bits by their first $k$ bits.
Namely Asharov~et~al.~\cite{asharov2021sorting} show that NCC implies that constant fan-in constant fan-out circuits must have size $\Omega\bigl(nk(w - \log(n) + k)\bigr)$ whenever $w > \log(n) - k$ and $k \leq \log n$. This is incomparable to our results as we have $w=0$.


\bibliography{main_all_in_one_file}

\appendix

\section{Proof of Lemma~\ref{lem:SufficientCoding}}
\label{app:CorrectionGame}
\SufficientCoding*

For the proof, we use an $\cF$-correction game introduced by Farhadi et al.~\cite{Farhadi19} (the statement of the following definition and lemma is due to Afshani et al.~\cite{Afshani19}).
\begin{definition}[$\cF$-correction game~\cite{Farhadi19, Afshani19}]
 Let $\cF \subseteq \{0,1\}^{m\ell}$. 
 The $\cF$-\emph{correction game} with $\ell + 1$ players is defined as follows. 
The game is played by $\ell$ ordinary players $p_0, \dots, p_{\ell - 1}$ and one designated supervisor player $\bar{u}$. 
The supervisor $\bar{u}$ receives $\ell$ strings $\alpha_0, \dots, \alpha_{\ell - 1} \in \{0,1\}^{m}$ chosen independently at random. 
For every $i \in [\ell]$, $\bar{u}$ sends $p_i$ a message $\beta_i$. 
Given $\beta_i$, the player $p_i$ produces a string $\gamma_i \in \{0,1\}^{m}$ such that $\bigl((\alpha_0 \oplus \gamma_0),\dots,(\alpha_{\ell-1} \oplus \gamma_{\ell-1})\bigr) \in \cF$.
\end{definition}

\begin{lemma}[\cite{Farhadi19, Afshani19}]
\label{lem:CorrectionGame}
 If $|\cF| \geq 2^{(1 - \varepsilon)m\ell}$, then there exists a protocol for the $\cF$-correction game with $\ell + 1$ players such that the messages $(\beta_i)_{i \in [\ell]}$ are prefix-free and
 \begin{equation}
  \sum_{i \in [\ell]} \E |\beta_i| \leq 3 \ell + 2 \ell \log \left( \sqrt{\frac{\varepsilon}{2}} \cdot m + 1 \right) + \sqrt{\frac{\varepsilon}{8}}\cdot m\ell \log \frac{2}{\varepsilon}.
  \label{eq:AdviceLength}
 \end{equation}
\end{lemma}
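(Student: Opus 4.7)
The plan is to give the supervisor an explicit protocol that, on input $\alpha$, finds some $\gamma\in\alpha\oplus\cF$ whose blocks $\gamma_i$ each have small Hamming weight, and transmits each $\gamma_i$ to player $p_i$ using a prefix-free code that is short on sparse strings. Each player decodes $\gamma_i$ and outputs it, so by construction $(\alpha_0\oplus\gamma_0,\ldots,\alpha_{\ell-1}\oplus\gamma_{\ell-1})\in\cF$.

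Fix the threshold $K=\lceil\sqrt{\varepsilon/2}\,m\rceil$. I encode $\gamma_i$ in two regimes. In the \emph{compact} regime, used when $|\gamma_i|\le K$, the codeword is a flag bit $0$, followed by an Elias gamma style prefix-free encoding of the weight $w_i\in\{0,\ldots,K\}$ in at most $2\lceil\log(K+1)\rceil+1$ bits, followed by $\lceil\log\binom{m}{w_i}\rceil$ bits specifying which weight-$w_i$ subset of $[m]$ carries the $1$s. In the \emph{fallback} regime the codeword is flag bit $1$ followed by the raw $m$ bits of $\gamma_i$. This makes each $\beta_i$ prefix-free. The supervisor's choice: let $T=\{\gamma:|\gamma_i|\le K\text{ for all }i\}$; on input $\alpha$, pick any $\gamma\in T\cap(\alpha\oplus\cF)$ if this set is nonempty and use the compact code for every player, and otherwise use the fallback code for any $\gamma\in\alpha\oplus\cF$.

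The core combinatorial step is bounding $\Pr_\alpha[T\cap(\alpha\oplus\cF)=\emptyset]$. Using $\log\binom{m}{K}\ge K\log(m/K)=\sqrt{\varepsilon/2}\,m\cdot\frac{1}{2}\log(2/\varepsilon)$, we have $\log|T|\ge\frac{1}{2}\sqrt{\varepsilon/2}\,m\ell\log(2/\varepsilon)$, which for sufficiently small $\varepsilon$ dwarfs $\varepsilon m\ell\ge m\ell-\log|\cF|$, so $|T|\cdot|\cF|\gg 2^{m\ell}$. The double-counting identity $\sum_\alpha|T\cap(\alpha\oplus\cF)|=|T|\cdot|\cF|$ gives $\E_\alpha|T\cap(\alpha\oplus\cF)|\gg 1$, and a second-moment or Chebyshev argument on the random variable $|T\cap(\alpha\oplus\cF)|$ (equivalently, a pigeonhole applied to the convolution $1_T\ast 1_\cF$) yields $\Pr_\alpha[\text{bad}]\le O\bigl(2^{m\ell}/(|T|\cdot|\cF|)\bigr)$. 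The fallback contribution to $\sum_i\E|\beta_i|$ is then at most $(m+1)\ell\cdot\Pr_\alpha[\text{bad}]$, which is negligible and is absorbed into the additive $3\ell$ slack.

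In the good regime, the per-player length is at most $2+2\lceil\log(K+1)\rceil+\lceil\log\binom{m}{|\gamma_i|}\rceil$, whose summed fixed overhead matches $3\ell+2\ell\log(\sqrt{\varepsilon/2}\,m+1)$ of~(\ref{eq:AdviceLength}) up to rounding. For the remaining term, using $|\gamma_i|\le K$ and $\log\binom{m}{K}\le K\log(em/K)=\sqrt{\varepsilon/2}\,m\bigl(\log e+\frac{1}{2}\log(2/\varepsilon)\bigr)$, which for small $\varepsilon$ is at most $\sqrt{\varepsilon/8}\,m\log(2/\varepsilon)$, yields the third term $\sqrt{\varepsilon/8}\,m\ell\log(2/\varepsilon)$ of~(\ref{eq:AdviceLength}). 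The principal technical obstacle is the covering estimate: for an arbitrary (possibly structured) dense $\cF$, justifying the second-moment bound rigorously may require either a slightly finer Fourier or additive-combinatorial input, or a direct convexity-style argument that uses only the density of $\cF$ to guarantee that $\cF+T$ fills nearly all of $\{0,1\}^{m\ell}$.
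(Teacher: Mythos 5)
Your reduction to a covering claim is where the argument breaks. The statement you need --- that $\Pr_\alpha\bigl[(\alpha\oplus\cF)\cap T=\emptyset\bigr]$ is of order $2^{m\ell}/(|T|\cdot|\cF|)$ --- does not follow from the first-moment identity, and a Chebyshev/second-moment argument needs $\E[X^2]=O(\E[X]^2)$ for $X=|T\cap(\alpha\oplus\cF)|$, which an arbitrary dense $\cF$ does not provide. Worse, with your per-block cap $K=\lceil\sqrt{\varepsilon/2}\,m\rceil$ the covering claim is actually false: take $\cF$ to be the set of strings whose Hamming weight restricted to the first $\ell/2$ blocks is at least $m\ell/4+s$ with $s=0.99\sqrt{\varepsilon\ln 2}\cdot m\ell/2$. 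This $\cF$ has size at least $2^{(1-\varepsilon)m\ell}$ for small $\varepsilon$ and large $m\ell$, yet for every $\alpha$ whose weight on those blocks is at most $m\ell/4$ (at least half of all inputs), any $f$ with all block differences of weight $\le K$ has weight at most $m\ell/4+(\ell/2)K\approx \frac{1}{2\sqrt2}\sqrt{\varepsilon}\,m\ell < m\ell/4+s$ on those blocks, since $0.99\sqrt{\ln 2}/2>1/(2\sqrt2)$; so no such $f\in\cF$ exists. Hence your fallback regime fires with constant probability and contributes $\Theta(m\ell)$ to $\sum_i\E|\beta_i|$, which destroys the bound; the obstacle you flag at the end is not a patchable technicality but a failure of the chosen threshold. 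There is also a quantitative slip in the good regime: bounding every block by the worst case, $\log{m \choose K}\le K\bigl(\log(m/K)+\log e\bigr)$, gives $\sqrt{\varepsilon/8}\,m\log(2/\varepsilon)+\sqrt{\varepsilon/2}\,m\log e$ per player, and the extra $\log e$ term does not fit under the stated right-hand side for any $\varepsilon$.

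For comparison: this paper does not prove the lemma (it imports it from Farhadi et al.\ and Afshani et al.), and the proof there avoids any per-block cap. The supervisor picks $f\in\cF$ nearest to the whole string $\alpha$ in Hamming distance; since $\cF$ has density $2^{-\varepsilon m\ell}$, vertex isoperimetry / measure concentration for the $1$-Lipschitz function $\alpha\mapsto d(\alpha,\cF)$ gives $\E\, d(\alpha,\cF)\le\sqrt{\varepsilon\ln 2/2}\cdot m\ell$. The block weights $w_i$ of $\gamma=\alpha\oplus f$ are then allowed to vary; each $\beta_i$ is a prefix-free encoding of $w_i$ (about $2\log(w_i+1)+O(1)$ bits) followed by $\lceil\log{m \choose w_i}\rceil$ bits for the positions, and concavity of $w\mapsto\log(w+1)$ and $w\mapsto\log{m \choose w}$ (Jensen over the $\ell$ blocks, using the bound on $\sum_i w_i$) yields exactly the three terms of Equation~(\ref{eq:AdviceLength}); the slack factor $\sqrt{\ln 2}<1$ in the distance bound is what absorbs the $\log e$-type overhead that your worst-case cap cannot. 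If you want to salvage your scheme, you must both replace the capped set $T$ by an argument about the nearest point (or a total-weight ball, where Harper's theorem applies) and replace the uniform bound $\log{m \choose K}$ by the concavity step.
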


Observe that for sufficiently small $\varepsilon$ and sufficiently large $m$ the formula in Equation~\ref{eq:AdviceLength} can be bounded by $\frac{m\ell}{4}$.
Thus, we can suppose that the expected total length of the messages sent by supervisor $u$ in the $\cF$-correction game is at most $\frac{m\ell}{4}$.

\begin{proof}[Proof of Lemma~\ref{lem:SufficientCoding}]
 Let $R = \bigr(G, q, (s_i, t_i)_{i \in [k]}\bigl)$ be a network given by the assumption of the lemma.
 We will create a directed acyclic network $R' = \bigr(G', c', (s'_i,t_i)_{i \in [k]}\bigr)$ which will admits a correct encoding scheme.
 Thus, we will be able to apply NCC to $R'$.
 Note that, the network $R'$ has new sources $s'_i$ but the original targets $t_i$.
 
 The network $R'$ is defined as follows.
 We add new sources $s'_1,\dots,s'_k$ and one special vertex $u$ to the graph $G$.
 For each $i$ we add the following new edges:
 \begin{itemize}
  \item Edge $(s'_i, s_i)$ and $(s'_i, u)$ of capacity $r$, i.e, edges connecting the new sources with the original ones and the new special vertex.
  \item Edges connecting the new special vertex $u$ with the original sources $s_i$ and the targets $t_i$, i.e, the edges $(u, s_i)$ and $(u, t_i)$ of capacity $\E|\beta_i|$, where $\beta_i$ is the message sent by the supervisor $\bar{u}$ to the player $p_i$ in the protocol for the $\cF$-correction game given by Lemma~\ref{lem:CorrectionGame}.
 \end{itemize}

This finishes the construction of $R'$.
By assumption, there is a set $\cF \subseteq \{0,1\}^{kr}$ and an encoding scheme $E$ for $R$ such that $E$ is correct on inputs in $\cF$.
Note that $R$ is a subnetwork of $R'$.
Thus, to create an encoding scheme for $R'$ which will be correct on every input in $\{0,1\}^{kr}$ we use an encoding $E$ to recover some messages $w = (w_0,\dots,w_{k-1}) \in \cF$ and the special vertex $u$ which will send messages as the supervisor $\bar{u}$ in the $\cF$-correction game.
After that, the targets $t_i$ will be able to reconstruct the input messages $w'_i$ received at the new sources $s'_i$.

More formally, let $w'_i$ be an input message received at the source $s'_i$. 
Each $w'_i$ is uniformly sampled from $\{0,1\}^r$ (independently on other $w'_j$).
Now, the encoding scheme $E'$ for $R'$ works as follows:
\begin{enumerate}
 \item Each source $s'_i$ sends the input message $w'_i$ to the vertex $s_i$ and $u$.
 \item The vertex $u$ computes the messages $\beta_0,\dots,\beta_{k-1}$ according to the protocol given by Lemma~\ref{lem:CorrectionGame} (applied for the messages $\alpha_i = w'_i$).
 Then for each $i \in [k]$, the vertex $u$ sends the messages $\beta_i$ to the vertex $s_i$ and $t_i$.
 \item Each vertex $s_i$ computes the string $\gamma_i$.
 By Lemma~\ref{lem:CorrectionGame}, it holds that 
 \[
\bigl((w'_0 \oplus \gamma_0), \dots,(w'_{k-1} \oplus \gamma_{k-1})\bigr) \in \cF.  
 \]

 Thus, we can use the encoding scheme $E$ for $R$ to reconstruct strings $w'_i \oplus \gamma_i$ at each target $t_i$.
 \item Each target $t_i$ can reconstruct strings $w'_i \oplus \gamma_i$ and $\gamma_i$. Thus, it can reconstruct the input message $w'_i$.
\end{enumerate}
By construction of the network $R'$, it is clear that the encoding scheme $E'$ respects the capacities $c'$.

The encoding scheme $E'$ witnesses that the coding rate of $R'$ is at least $r$.
Thus by NCC (Conjecture~\ref{conj:NCC}), we conclude that the flow rate of $\un(R')$ is at least $r$ as well, i.e., there is a multicommodity flow $F = (f^0,\dots,f^{k-1})$ for $\un(R')$ which transports at least $r$ units of each commodity $i$.
Now, we argue that there is only a small fraction of the total flow which goes through the special vertex $u$.

\begin{claim}
\label{clm:SmallUFlow}
 The size of total flow which goes through $u$ is at most $\frac{3}{4} kr$.
\end{claim}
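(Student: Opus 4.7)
The plan is to bound the amount of flow through $u$ by counting the total capacity of edges incident to $u$ in $\un(R')$, and then using the fact that $u$ is neither a source nor a target of any commodity, so every unit of flow at $u$ must both enter and leave, consuming two units of incident capacity.

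First, I would enumerate the edges incident to $u$ in $R'$. By construction, for every $i \in [k]$ there is an edge $(s'_i, u)$ of capacity $r$, contributing $kr$ in total, and there are edges $(u, s_i)$ and $(u, t_i)$ each of capacity $\E|\beta_i|$, contributing $2 \sum_{i \in [k]} \E|\beta_i|$ in total. No other edges are incident to $u$. Hence the sum of capacities of edges incident to $u$ in $\un(R')$ equals
\[
    kr + 2 \sum_{i \in [k]} \E|\beta_i|.
\]

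Next, I would invoke Lemma~\ref{lem:CorrectionGame} together with the observation directly following it: applied with $m = r$ and $\ell = k$, for $\varepsilon$ sufficiently small and $r$ (which equals our $q$) sufficiently large we obtain $\sum_{i \in [k]} \E|\beta_i| \leq \frac{kr}{4}$. Substituting, the total capacity of edges incident to $u$ is at most $kr + 2 \cdot \frac{kr}{4} = \frac{3kr}{2}$.

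Finally, since $u$ is neither a source nor a target of any commodity in $R'$, each unit of flow that passes through $u$ must enter along one edge incident to $u$ and leave along another, and by the multicommodity capacity constraint the total amount of flow on edges incident to $u$ is at most $\frac{3kr}{2}$. Therefore the total flow through $u$ is at most $\frac{1}{2} \cdot \frac{3kr}{2} = \frac{3kr}{4}$, yielding the claim. The only delicate point is the quantitative bound on $\sum_i \E|\beta_i|$, but this is already subsumed by the observation after Lemma~\ref{lem:CorrectionGame} under the hypotheses on $\varepsilon$ and $q$ assumed in Lemma~\ref{lem:SufficientCoding}; the rest is just an edge-capacity accounting.
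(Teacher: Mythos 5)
Your proposal is correct and follows essentially the same route as the paper: sum the capacities of edges incident to $u$ (namely $kr$ from the edges $\{s'_i,u\}$ plus $2\sum_{i\in[k]}\E|\beta_i| \leq \frac{1}{2}kr$ via the observation after Lemma~\ref{lem:CorrectionGame}), obtaining $\frac{3}{2}kr$, and then use flow conservation at $u$ (which is neither a source nor a target) to conclude that the flow through $u$ is at most half of that, i.e.\ $\frac{3}{4}kr$. The only cosmetic difference is your parenthetical identification of the encoding rate $r$ with the uniform capacity $q$, which matches how the lemma is applied but is not needed for the capacity accounting itself.
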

\begin{claimproof}
The total capacity of the edges incident to the vertex $u$ is at most $\frac{3}{2} kr$.
 The vertex $u$ is incident to $k$ edges $\{u,s'_i\}$ of capacity $r$, which contribute by $kr$ to the total capacity.
 Then for each $i$, the vertex $u$ is incident to the edges $\{u,s_i\}$ and $\{u,t_i\}$, which have both capacity $\E|\beta_i|$.
 By Lemma~\ref{lem:CorrectionGame}, we have that $\sum_i \E |\beta_i| \leq \frac{1}{4} kr$.
 Thus, these edges contribute by $\frac{1}{2} kr$ to the total capacity.
  By conservation of the flow, it must hold that
\begin{align*}
 \sum_{i \in [k]} \sum_{v \in V(G')} f^i(v, u) &\leq \frac{3}{4} kr, \\
 \sum_{i \in [k]} \sum_{v \in V(G')} f^i(u, v) &\leq \frac{3}{4} kr.
\end{align*} 
\end{claimproof}

Let $A \subseteq [k]$ be a set of indices of source-target pairs $(s_i, t_i)$ such that at least $\frac{r}{10}$ units of the commodity $i$ do not go through the vertex $u$.
It follows that the set $A$ is substantially large.

\begin{claim}
\label{clm:GoodSet}
$|A| \geq \frac{k}{6}$. 
\end{claim}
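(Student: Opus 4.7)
The plan is to exploit the multicommodity flow $F = (f^0, \ldots, f^{k-1})$ witnessing the flow rate of $\un(R')$, together with the capacity bound at the special vertex $u$ established in Claim~\ref{clm:SmallUFlow}, via a direct averaging argument.

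First, for each commodity $i \in [k]$ define $y_i = \sum_{v \in V(G')} f^i(v,u)$, the total flow of commodity $i$ entering $u$. Since $u$ is neither a source nor a target of any commodity, flow conservation gives that the incoming and outgoing flows of commodity $i$ at $u$ agree, so $y_i$ also equals $\sum_v f^i(u,v)$. Claim~\ref{clm:SmallUFlow} therefore yields
\[
\sum_{i \in [k]} y_i \leq \tfrac{3}{4} kr.
\]
Next, I would interpret $y_i$ as an upper bound on the amount of commodity $i$ that travels from $s'_i$ to $t_i$ through $u$: decomposing $f^i$ into a flow along simple $s'_i$-to-$t_i$ paths plus a circulation, the total weight carried by paths passing through $u$ is at most $y_i$. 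Because $F$ transports at least $r$ units of each commodity, the weight carried by paths avoiding $u$ is at least $r - y_i$.

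Now let $B = [k] \setminus A$. By definition of $A$, for every $i \in B$ fewer than $r/10$ units of commodity $i$ avoid $u$, and so the preceding observation gives $r - y_i < r/10$, i.e.\ $y_i > \tfrac{9}{10} r$. Summing,
\[
\tfrac{9}{10} r \cdot |B| < \sum_{i \in B} y_i \leq \sum_{i \in [k]} y_i \leq \tfrac{3}{4} kr,
\]
which rearranges to $|B| < \tfrac{5}{6} k$ and hence $|A| = k - |B| \geq \tfrac{k}{6}$, as desired.

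The only subtle step is the passage from ``incoming flow at $u$'' to ``weight carried by paths through $u$,'' since a multicommodity flow is a function, not a bundle of paths; I would handle it by invoking the standard flow-path decomposition applied to each $f^i$ separately and discarding cycles. Everything else is a one-line averaging argument using Claim~\ref{clm:SmallUFlow}, so this path-decomposition bookkeeping is the main (and very minor) obstacle.
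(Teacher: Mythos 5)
Your proof is correct and is essentially the paper's argument: both rest on the observation that every commodity outside $A$ sends more than $\tfrac{9}{10}r$ units through $u$, which summed against the $\tfrac{3}{4}kr$ bound of Claim~\ref{clm:SmallUFlow} forces $|[k]\setminus A| < \tfrac{5}{6}k$ (the paper states this as a contradiction, you state it as a direct averaging bound). Your extra step of formalizing ``units of commodity $i$ going through $u$'' via flow-path decomposition and the inflow $y_i$ is a harmless, slightly more careful rendering of what the paper leaves implicit.
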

\begin{claimproof}
 Suppose opposite, $|A| < \frac{k}{6}$, i.e., there are at least $\frac{5k}{6}$ source-target pairs $(s_i, t_i)$ such that strictly more than $\frac{9r}{10}$ units of the commodity $i$ goes through the vertex $u$.
 Therefore, the total size of the flow going through $u$ is strictly larger than $\frac{5k}{6} \cdot \frac{9r}{10} = \frac{3}{4} kr$, which contradicts Claim~\ref{clm:SmallUFlow}.
\end{claimproof}

Let $L \subseteq [k]$ be a set of indices of pairs $(s_i, t_i)$ such that their distance in $\un(G)$ is at least $d$.
By the assumption of the lemma, it holds that $|L| \geq \delta k$.
Note that for each $i$, the distance between $s_i$ and $t_i$ in $G'$ is 2 because of the vertex $u$.
However, due to Claim~\ref{clm:GoodSet} there is a lot of source-target pairs $(s_i,t_i)$ which are far in $\un(G)$ and some units of the commodity $i$ do not go through $u$:
\[
 \bigl|A \cap L\bigr| \geq \bigl|A\bigr| - \bigl|[k] \setminus L\bigr| = \left(\delta - \frac{5}{6} \right)\cdot k.
\]

Let $L' = A \cap L$, i.e., the set $L'$ contains indices $i \in [k]$ such that distance between $s_i$ and $t_i$ in $\un(G)$ is at least $d$ and at least $\frac{r}{10}$ units of the commodity $i$ do not go through $u$ -- thus, it has to go through paths of length at least $d$.
Now, we are ready to prove the assertion of the lemma.
Let $\bar{E} = E\bigl(\un(G)\bigr)$.
\begin{align*}
 r \cdot |\bar{E}| = \sum_{e \in \bar{E}} c(e) & \geq  \sum_{\{v,w\} \in \bar{E}} \sum_{i \in [k]} f^i(v,w) + f^i(w,v) \\
 & \geq \sum_{i \in L'} \sum_{\{v,w\} \in \bar{E}} f^i(v,w) + f^i(w,v) \\
 & \geq \left(\delta - \frac{5}{6} \right)k \cdot \frac{r}{10}d = \delta'\cdot kr\cdot d & \text{By definition of $L'$.}
\end{align*}
It follows that $\frac{|\bar{E}|}{k} \geq \delta'\cdot d$.
\end{proof}

\end{document}